\newtheorem{theorem}{Theorem}[section]
\def\ps@pprintTitle{\let\@oddhead\@empty\let\@evenhead\@empty
\def\@oddfoot{\let\@journalShortName\@empty}\let\@evenfoot\@oddfoot}
\begin{document}
\begin{frontmatter}
\title{Modeling Predator-Prey Dynamics with Stochastic Differential Equations: Patterns of Collective Hunting and Nonlinear Predation Effects}
\author[]{Junyi Qi}
\author[]{T\^{o}n Vi\d {\^{e}}t T\d{a}}
\address{Graduate School of Bioresource and Bioenvironmental Sciences, Kyushu University\\
744 Motooka, Nishi Ward, Fukuoka 819-0395, Japan}

\begin{abstract}
We investigate predator–prey school interactions in aquatic environments using a stochastic differential equation (SDE)-based, particle-level model that incorporates attraction, repulsion, alignment, and environmental noise. Two predation strategies—center attack and nearest attack—are examined to assess their effects on prey survival, predator efficiency, and group dynamics. Simulations reveal diverse emergent behaviors such as prey dispersal and regrouping, oscillatory predation with collective defense, and predator encirclement. Results show that collective hunting enhances capture efficiency compared to solitary attacks, but benefits diminish beyond a critical predator group size due to intra-predator competition. This work provides new insights into cooperative predation and introduces a generalizable SDE framework for analyzing predator–prey interactions.
\end{abstract}

\begin{keyword} 
Fish schooling \sep Swarm behavior \sep Predator-prey system \sep Particle-based model \sep Stochastic differential equations \sep Group hunting dynamics \sep Cooperation and competition \sep Predation efficiency
\end{keyword}
\end{frontmatter}

\section{Introduction} \label{introduction}
Predator-prey interactions are fundamental drivers of ecosystem stability, population dynamics, and evolutionary adaptation \cite{Allesina2012}. In aquatic environments, these interactions are further complicated by the prevalence of schooling behavior in prey species \cite{Partridge1982}. While schooling offers numerous advantages to prey, including enhanced hydrodynamic efficiency \cite{Jolles2020, Ioannou2021, Rubenstein1978}, increased mating opportunities \cite{Jolles2020, Rubenstein1978}, improved foraging success \cite{Jolles2020, Ioannou2021, Pitcher1982}, and, critically, reduced predation risk \cite{Partridge1982, Jolles2020, Rubenstein1978, Pitcher1986, Magurran1990, Demsar2015}, predators have, in turn, evolved cooperative hunting strategies to counteract these defenses and increase their hunting success \cite{Anderson2001, Dugatkin1997, Gazda2004}.

Because of these dynamics, swarm behavior has attracted considerable attention across ecology, mathematical modeling, and computational science \cite{Aoki1982, Reynolds1987, Cucker2007, Couzin2009, Vicsek2012, Ton2018}. Prior studies have sought to  elucidate how simple inter-individual interactions give rise to complex collective behaviors, a phenomenon of particular relevance in predator-prey systems where both collective tactics and cooperation are crucial for survival and predation success.  Mathematical models provide a quantitative framework for exploring these behaviors, enabling researchers to reproduce and predict emergent group-level patterns across varying environments.  The capacity of such models to characterize the intricate, underlying interactions within groups is essential for understanding aquatic ecosystems, where individual interactions and coordination underpin both group efficiency and resilience.

Several foundational mathematical models have significantly advanced our understanding of swarm dynamics and predator-prey interactions.  The Vicsek model, introduced by Vicsek et al. \cite{Vicsek1995}, is renowned for its ability to simulate self-organized motion in swarms.  In this model, agents (or particles) move at a constant speed but adjust their direction based on the average heading of their neighbors.  This simple rule set demonstrates how local interactions can lead to large-scale collective alignment in the presence of noise.  The underlying concept—that prey species (e.g., schooling fish) develop cohesion strategies under predation pressure—provided an early framework for studying collective defense mechanisms in ecological systems.

The Boid model, developed by Reynolds \cite{Reynolds1987} in 1986, represents another significant contribution to the study of swarm dynamics.
This model describes how entities (e.g., birds, fish) move based on three fundamental rules:
separation (avoiding collisions), alignment (matching movement direction), and cohesion (staying near neighbors). These simple, local rules give rise to complex, coordinated group movements, demonstrating key mechanisms of collective behavior and swarm dynamics.

The model developed by Oboshi et al. \cite{Oboshi2002} employs genetic algorithms to evolve prey escape behaviors in a simulated ecological environment.  Their model incorporates parameters representing individual behaviors—attraction ($\alpha$), combined strategies ($\beta$), alignment ($\gamma$), and avoidance ($\delta$)—which evolve through selection, crossover, and mutation.  Upon predator detection, prey may switch to an "urgent mode," adopting adaptive movement strategies such as dispersal or cohesive schooling to enhance evasion.  These strategies leverage the dilution and confusion effects to improve survival.  Through simulations across various scenarios, Oboshi et al. demonstrate that individuals exhibiting collective schooling behaviors achieve significantly higher evasion success rates compared to solitary individuals.  Their work also explores the evolution of cohesion through a selective mechanism, where cohesion confers a survival advantage under predation pressure.

Two primary classes of mathematical models are commonly employed to study predator-prey interactions that incorporate schooling behavior.  The first class focuses on tracking the population densities of predators and prey over time.  Within this framework, interactions and dynamics are typically described by models based on functional responses (e.g., \cite{Abrams2000, Abrams20002, Bera2015, Bera2016, Maiti2016, Manna2018, Manna2019,Yasin2025,Yasin2024,Baber2024,Ahmed2025,ta2010survival}).  While effective for understanding macro-level trends, this population-level approach lacks the granularity necessary to capture individual-based behaviors and the dynamics of complex schooling.

The second class of models, which forms the basis for the present study, examines predator-prey systems from the perspective of self-propelled agents within an artificial ecosystem.  Here, both predators and prey are treated as independent agents (or particles) that move and interact within a defined spatial domain.  Agent-based modeling enables detailed investigation of individual-level interactions, crucial for understanding the emergence of schooling as an adaptive strategy under predation pressure.  By modeling agents as self-driven entities governed by simple interaction rules—including alignment, attraction, repulsion, and evasion—we can simulate emergent group-level phenomena such as collective motion, predator avoidance, and schooling behavior.

To justify the modeling framework presented in this paper, it is important to acknowledge prior work documenting group hunting in marine predators.  These studies provide crucial context regarding predator cooperation in overcoming prey defenses and enhancing hunting success.  Real-world observations serve as a foundation for validating models that simulate predator-prey interactions and schooling behaviors.  Observed cooperative strategies in marine predators underscore the influence of coordination and collective action, further emphasizing the need for an agent-based modeling approach to accurately capture these dynamics.  Several key studies, referenced below, offer empirical evidence supporting the existence of cooperative hunting behaviors, thus aligning with and reinforcing the theoretical model employed in this research.

Herbert-Read et al. \cite{Herbert-Read2016} provide compelling observations of group hunting in sailfish, revealing a cooperative strategy termed "proto-cooperation."  Sailfish coordinate attacks on sardine schools through an alternating role strategy, resulting in injury to multiple prey during each attack.  Although only a fraction of attacks are successful, increased prey injury correlates with higher post-attack capture rates.  This alternating attack strategy, which does not require spatial coordination between individual hunters, enhances hunting efficiency, benefiting individual predators and demonstrating the advantages of cooperative hunting through shared effort.

Steinegger et al. \cite{Steinegger2018} investigated cooperative foraging in yellow saddle goatfish, demonstrating that the success of group foraging relies on simple decision rules.  In a hunting context, individuals dynamically switch between "initiator" and "follower" roles based on their spatial position relative to the prey.  The lead pursuer directly chases the prey, while others intercept escape routes, dramatically increasing capture success. This example illustrates how simple, spatially oriented decisions can generate complex group coordination and highlights the potential of simple behavioral rules in cooperative hunting scenarios.

Lönnstedt et al. \cite{Lonnstedt2014} studied cooperative hunting in lionfish (Dendrochirus zebra), describing how they use flared fin displays to communicate with partners.  Signaling among conspecifics and heterospecifics triggers cooperation in coordinated group hunts.  Their findings demonstrate that cooperative hunting significantly increases success rates compared to solitary hunting.  Furthermore, the observed initiator-responder, turn-taking attack strategy aligns more closely with balanced resource allocation than individual utility maximization.  This suggests that cooperative signaling is an inherent trait in lionfish, contributing to their predatory efficiency.

Strübin et al. \cite{Strubin2011} described collaborative hunting in yellow saddle goatfish (Parupeneus cyclostomus), highlighting role differentiation within group hunts in coral habitats.  Some individuals accelerate to intercept mobile prey, while others strategically position themselves to block escape routes. This coordinated behavior, involving prey encirclement and the use of barbels to dislodge prey from coral crevices, likely reflects the challenges of capturing mobile prey in complex environments.  Such hunting strategies may also contribute to the evolution of group living in this species.

Major \cite{Major1978} investigated predator-prey interactions between jack (Caranx ignobilis) and Hawaiian anchovy (Stolephorus purpureus).  His findings demonstrate that clustered predators are more effective than solitary predators at capturing prey, especially when prey form schools.  Typically, the leading predator exhibits the highest capture success, while following predators provide secondary benefits by further disrupting prey schools and isolating individuals for easier capture.  These findings support the concept of co-evolution between predator and prey, where schooling behavior has shaped both predation efficiency and prey survival through natural selection.

Demšar and Lebar Bajec \cite{Demsar2014} employed a fuzzy logic model to investigate hunting strategies, including attacking the school's center or targeting the nearest individuals. They observed that predators face challenges due to the coordinated behavior of prey.  Subsequently, Demšar et al. \cite{Demsar2015} highlighted the potential for analyzing composite hunting strategies, enabling predators to adapt their tactics over time.  For example, predators might initially disperse prey schools before launching more efficient attacks. This work underscores the importance of coordinated, flexible actions within prey schools in countering predator defenses and influencing hunting efficiency.

Nevertheless, most existing models remain limited. Many predator–prey systems either focus on population-level densities using functional responses \cite{Abrams2000, Bera2016}, or on interactions between a single predator and a prey school \cite{Aditya2024}. Such approaches, while informative, cannot fully capture the complexity of group hunting observed in natural systems, where predators coordinate their movements in schools to increase capture success \cite{Herbert-Read2016, Lonnstedt2014}. In addition, several agent-based frameworks omit biologically realistic constraints such as collision avoidance, which can produce unrealistic dynamics. These gaps motivate the present work.

In parallel with research on ecological modeling, recent years have witnessed a rapid expansion of stochastic and fractional dynamical models across diverse scientific fields. Such approaches have been applied to epidemic dynamics, for instance in stochastic COVID-19 models with Lévy noise and isolation strategies \cite{Danane2021,Hama2022}, as well as in the broader context of pandemic modeling and intelligent control \cite{Hammouch2023}. Similar methods have been employed in tumor–immune interactions \cite{Baleanu2019}, water pollution management \cite{Ebrahimzadeh2024}, nonlinear oscillatory systems such as time-dependent pendula \cite{Baleanu2023}, and the transmission of emerging infectious diseases  \cite{gao2025dynamics,Baleanu2023b}. Collectively, these studies highlight a growing interest in the use of stochasticity, jumps, and fractional operators to enhance model realism and predictive capacity, particularly in systems where uncertainty, noise, and memory effects are intrinsic. In this broader context, our work contributes to the understanding of predator–prey interactions in aquatic environments, where variability and stochastic influences also play critical roles in shaping group-level dynamics.

Accordingly, this study develops a stochastic differential equation (SDE) model that explicitly describes the dynamics of predator and prey schools. While the framework simplifies certain processes such as sensory perception and decision-making, its core assumptions---alignment, attraction, repulsion, and behavioral uncertainty---are grounded in established principles of fish school behavior \cite{Camazine2001,Couzin2009,Krause2002}. 
An additional key assumption concerns the hunting strategies of predators. Drawing on observations in various fish species \cite{Pavlov2000,Grobecker1983,Rizkalla2008,Pietsch2020,Shallenberger1973}, we introduce the following rule: a predator determines its movement based on a weighted average of the position and orientation of either (i) its nearest prey fish or (ii) the centroid of the prey school. The first strategy, termed \emph{nearest attack}, reflects a predator focusing primarily on its closest prey, whereas the second strategy, termed \emph{center attack}, represents predators targeting the geometric center of the prey school. 
Finally, to enhance biological realism, the model incorporates interaction limits that prevent overlap between individuals. Together, these features ensure a balance between ecological plausibility and mathematical tractability.

The proposed model generates diverse predation and escape patterns consistent with observations in natural systems, such as dolphins or tuna coordinating against fish schools \cite{Krause2002}. Moreover, it demonstrates the advantages of cooperative predation compared with solitary hunting. In doing so, it provides a new agent-based modeling framework that bridges individual-level interactions and school-level dynamics. This contribution helps the scientific community in three main ways: (i) by advancing theoretical understanding of how predator strategies and group size affect prey survival and predator efficiency; (ii) by offering a flexible platform to test ecological and evolutionary hypotheses about collective hunting and defense; and (iii) by supporting applications in conservation biology and bio-inspired computation. Beyond reproducing known behaviors, the framework thus provides both explanatory power and practical utility for researchers studying collective dynamics in ecological and artificial systems.

The remainder of the paper is structured as follows. Section \ref{modeldescription} introduces the SDE-based model and details the hunting strategies and capture conditions. Section \ref{PredationPatternsBetweenPredatorandPreySchools} examines the influence of model parameters on emergent predation patterns. Section \ref{Effectsofpredatorschoolsize} investigates the effect of predator school size on prey dynamics. Finally, Section \ref{Conclusion} summarizes the main findings and discusses their implications.

\section{Model Description} \label{modeldescription}

This section presents our model for predator–prey fish schooling. The model captures both \textit{intra-school interactions} (within predator and prey schools) and \textit{inter-school interactions} (between predators and prey, either at the individual level or relative to the prey school centroid). Together, these interactions define the behavioral rules that govern schooling and determine the forces exerted by predators on individual prey or on the prey school as a whole.

Predator–prey interactions are based on behavioral rules observed by biologists in natural schooling dynamics:

\begin{center}
\begin{enumerate}[(a)]
    \item Prey schools lack a designated leader; all individuals follow the same rules.
    \item Each prey fish adjusts its movement based on a weighted average of the positions and orientations of its nearest neighbors.
    \item Each predator follows a similar rule, also without a leader, using a weighted average of the positions and orientations of its nearest neighbors.
    \item Each prey fish selects its escape direction relative to the weighted positions of nearby predators.
    \item Both predator and prey behaviors include a stochastic component, reflecting randomness in hunting.
    \item Each predator targets either the nearest prey or the prey school centroid, depending on its hunting strategy.
\end{enumerate}
\end{center}

Rules (a), (b), (c), and (e) follow \cite{Camazine2001,Takeshi2012,b7}, while (d) is introduced in \cite{Aditya2024}. Rule (f) corresponds to the “nearest-prey” and “centroid” attack strategies, mentioned in Section \ref{introduction}.

\subsection*{Stochastic Differential Equation (SDE) System}

The system of SDEs governing the predator–prey dynamics is given by:
\begin{equation}
    \left\{
    \begin{aligned}
        dx_{i} &= v_{i} \, dt + \sigma_{i} \, dw_{i}^1(t), \quad i = 1, 2, \ldots, N,\\
        dv_{i} &= \Bigg[ -\alpha\sum_{\substack{j=1 \\ j\neq i}}^{N}\left( \frac{r^p}{\|x_i-x_j\|^p+ \epsilon } - \frac{r^q}{\|x_i-x_j\|^q+ \epsilon } \right)(x_{i}-x_{j}) \\
               &\quad -\beta\sum_{\substack{j=1 \\ j\neq i}}^{N}\left( \frac{r^p}{\|x_i-x_j\|^p+ \epsilon } + \frac{r^q}{\|x_i-x_j\|^q+ \epsilon } \right)(v_{i}-v_{j}) \\
               &\quad + \delta\sum_{k=1}^{M}\left( \frac{R_{1}^{\theta_{1}}}{\|x_i-y_k\|^{\theta_{1}}+ \epsilon } (x_{i}-y_{k}) \right) \Bigg] dt, \\
        dy_{k} &= u_{k} \, dt + \sigma_{k} \, dw_{k}^2(t), \quad k = 1, 2, \ldots, M,\\
        du_{k} &= \left[ \delta\sum_{\substack{j=1 \\ j\neq k}}^{M} \frac{R^\theta}{\|y_{k}-y_{j}\|^\theta+\epsilon } (y_{k}-y_{j}) + F(\mathbf{X}, \mathbf{V}, \mathbf{Y}, \mathbf{U}) \right] dt,
    \end{aligned}
    \right.
\label{eq:dynamic_predation_system}    
\end{equation}

where
\begin{itemize}
    \item \( x_i, v_i \): position and velocity of the \( i \)-th prey,
    \item \( y_k, u_k \): position and velocity of the \( k \)-th predator, in $\mathbb R^d$ ($d=2,3,\dots$),
    \item \( N \): prey population size, \quad \( M \): predator population size,
    \item \( \mathbf{X}= (x_1, x_2, \dots, x_N), \mathbf{V}=(v_1, v_2, \dots, v_N), \mathbf{Y}=(y_1, y_2, \dots, y_M), \mathbf{U}=(u_1, u_2, \dots, u_M) \): state vectors of all prey and predator positions and velocities,
    \item \( \| \cdot \| \): Euclidean norm,
    \item $1<p<q<\infty$, \( \theta>1 \), \( \theta_1>1\): exponents, \quad other parameters: positive constants.
\end{itemize}

This system has two main components: prey dynamics and predator dynamics, coupled through predator–prey interactions.

\paragraph{Prey school dynamics}  
The prey dynamics, governed by $x_i$ and $v_i$, incorporate attraction, repulsion, velocity alignment, and stochastic noise.  
\begin{itemize}
    \item If $\|x_i-x_j\|>r$, the attractive force $-\alpha \tfrac{r^p(x_i-x_j)}{\|x_i-x_j\|^p+\epsilon}$ dominates, drawing fish together.  
    \item If $\|x_i-x_j\|<r$, the repulsive force $\alpha \tfrac{r^q(x_i-x_j)}{\|x_i-x_j\|^q+\epsilon}$ dominates, preventing collisions.  
\end{itemize}

The predator–prey repulsion is modeled by
\[
\delta\sum_{k=1}^{M}  \frac{R_1^{\theta_1}}{\|x_i - y_k\|^{\theta_1}+ \epsilon} (x_i - y_k),
\]
where \(R_1 > r\) represents the escape distance. The force increases as predators approach, mimicking natural escape behavior.  

Velocity matching is also distance-dependent: individuals align more strongly when closer, reducing collisions. These interaction forces are generalized from Newton’s law of gravitation (attraction) and van der Waals forces (repulsion).

\paragraph{Predator school dynamics}  
Predator dynamics, governed by $y_k$ and $u_k$, include intra-school repulsion and inter-school hunting interactions. Intra-school repulsion is modeled as:
\[
\delta \sum_{\substack{j=1 , j \neq k}}^{M} \frac{R^{\theta}}{\|y_k - y_j\|^{\theta}+ \epsilon} (y_k - y_j),
\]
with $R>0$ denoting the predator–predator critical distance. This prevents clustering and maintains spacing within the predator school.

\paragraph{Regularization via $\epsilon$}  
We incorporate a small parameter $\epsilon$ in the denominators to prevent singularities in the forces and stabilize simulations. Biologically, this modification ensures that forces remain finite at very small distances, reflecting realistic interaction limits. With $\epsilon$, the forces become
\begin{equation} \label{two_forces}
    f_1(x) = \frac{x}{x^p + \epsilon}, \quad f_2(x) = \frac{x}{x^q + \epsilon},
\end{equation}
where $x \geq 0$ is the distance between two individuals. As shown in Figure \ref{force}, these forces decay to zero at large distances, remain finite at very small distances, and vanish below a threshold, consistent with biological interactions.

\begin{figure}[H]
\begin{center}
\includegraphics[scale=0.5]{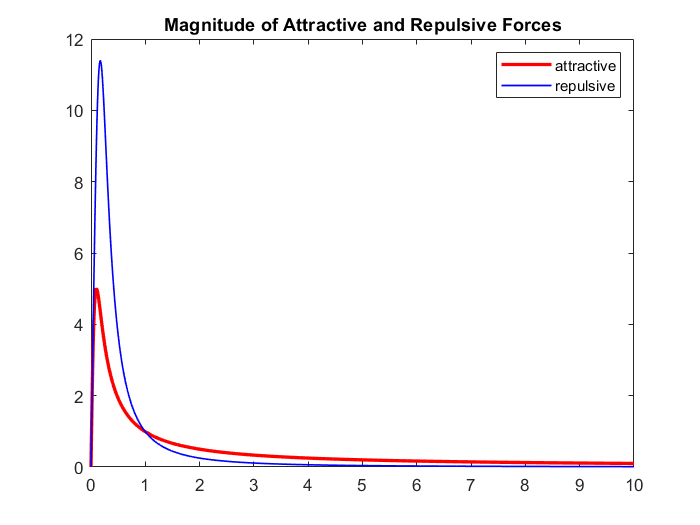}
\caption{Graphs of $f_1(x)$ and $f_2(x)$ in \eqref{two_forces}, with parameters  $p=2$, $q=3$, and $\epsilon=0.01.$ }
\label{force}
\end{center}
\end{figure}

\paragraph{Stochastic effects}  
The terms $dw_i^1(t)$ and $dw_k^2(t)$ represent independent $d$-dimensional Brownian motions, scaled by noise coefficients $\sigma_i$ and $\sigma_k$. These capture randomness in behavior due to imperfect information processing and execution. Formally, they are defined on a complete probability space $(\Omega, \mathcal F, \{\mathcal F_t\}_{t\geq 0}, \mathbb P)$ satisfying the usual conditions.

\paragraph{Hunting strategies}  
Predators in our model employ two distinct hunting strategies derived from observed behaviors in aquatic ecosystems: (i) \textit{center attack}, where predators target the centroid of the prey school, and (ii) \textit{nearest attack}, where predators pursue the closest individual prey. These strategies reflect contrasting approaches to predation, either disrupting the collective structure of the school or isolating vulnerable individuals. In the model \eqref{eq:dynamic_predation_system}, 
the inter-school interaction $F(\mathbf{X}, \mathbf{V}, \mathbf{Y}, \mathbf{U})$ represents hunting strategies.

\medskip
\textbf{Strategy I: Center attack.}  
In many aquatic ecosystems, predators target the center of a prey school to maximize capture efficiency. This approach disrupts the cohesion of the prey school, making it easier to isolate individuals \cite{Demsar2015}. Such behavior has been documented in species including mackerel (\textit{Scomber} spp.), sardines (\textit{Sardinops} spp.), and anchovies (\textit{Engraulis} spp.) \cite{Pavlov2000}.

The corresponding force function is defined as:
\begin{equation}
F(\mathbf{X}, \mathbf{V}, \mathbf{Y}, \mathbf{U})=-\frac{R_2^{\theta_2}}{\|y_k - x_c\|^{\theta_2}+ \epsilon } \left[\gamma_1(y_k - x_c)+\gamma_1\gamma_2(u_k-v_c)\right],
\label{eq:center}
\end{equation}
where the prey school center is given by the mean position and velocity:
\[
x_c = \frac{1}{N} \sum_{i=1}^{N} x_i, 
\quad v_c = \frac{1}{N} \sum_{i=1}^{N} v_i.
\]
Here, $R_2$ is the threshold distance beyond which predators initiate attacks, with $R_2>r$. Parameters $\theta_2$, $\gamma_1$, and $\gamma_2$ are positive constants. The distance $\|y_k - x_c\|$ determines the strength of attraction: predators closer to the school center experience stronger pulling forces toward it.

\medskip
\textbf{Strategy II: Nearest attack.}  
In contrast, some predators prioritize the closest prey, reflecting ambush predation tactics observed in marine species such as stonefish (\textit{Synanceia verrucosa}) \cite{Grobecker1983}, Atlantic stargazer fish (\textit{Uranoscopus scaber}) \cite{Rizkalla2008}, frogfish (\textit{Antennarius commerson}) \cite{Pietsch2020}, and scorpionfish (\textit{Iracundus signifer}) \cite{Shallenberger1973}.

The force function for this strategy is:
\begin{equation}  
F(\mathbf{X}, \mathbf{V}, \mathbf{Y}, \mathbf{U}) = -\frac{1}{N} \sum_{j=1}^{N} \frac{R_2^{\theta_2}}{\|y_k - x_j\|^{\theta_2} + \epsilon } \left[ \gamma_1 (y_k - x_j) + \gamma_1 \gamma_2 (u_k - v_j) \right].
\label{eq:nearest}
\end{equation}
The parameters mirror those in the center attack strategy. However, the weighting factor $\tfrac{1}{\|y_k - x_j\|^{\theta_2}+\epsilon}$ ensures stronger attraction toward closer prey. As a result, the nearest prey is the most likely to be pursued and captured.

Under either hunting strategy \eqref{eq:center} or \eqref{eq:nearest}, we obtain the following well-posedness result.  

\begin{theorem} \label{theorem1}
Let the initial state be
\[
(x_1(0),\dots,x_N(0), v_1(0),\dots,v_N(0), y_1(0),\dots,y_M(0), u_1(0),\dots,u_M(0)) \in \mathbb{R}^{Nd} \times \mathbb{R}^{Nd} \times \mathbb{R}^{Md} \times \mathbb{R}^{Md}.
\]
Then the system \eqref{eq:dynamic_predation_system} admits a unique local solution on some interval $[0,\tau)$, where $\tau \leq \infty$. If $\tau < \infty$ almost surely, then $\tau$ is an explosion time.
\end{theorem}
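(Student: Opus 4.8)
\emph{Proof proposal.} The plan is to recast the whole system \eqref{eq:dynamic_predation_system} as a single Itô SDE in a Euclidean space and then invoke the classical existence--uniqueness theorem for equations with locally Lipschitz coefficients. First I would stack all unknowns into one state vector $Z=(\mathbf X,\mathbf V,\mathbf Y,\mathbf U)\in\mathbb R^{n}$ with $n=2(N+M)d$, and write the system as $dZ = b(Z)\,dt + \Sigma\,dW(t)$, where $W$ collects the independent Brownian motions $w_i^1, w_k^2$ and $\Sigma$ is the constant matrix that inserts the noise coefficients $\sigma_i,\sigma_k$ into the position equations (the velocity equations carry no noise). Since $\Sigma$ does not depend on $Z$, the diffusion coefficient is trivially globally Lipschitz, so the entire burden falls on the drift $b$.

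The key step is to verify that $b$ is locally Lipschitz on $\mathbb R^{n}$. The only nonlinear building blocks are, for an exponent $s>1$ and $z\in\mathbb R^{d}$, the maps $z\mapsto \dfrac{z}{\|z\|^{s}+\epsilon}$ and $z\mapsto \dfrac{\|z\|^{s}}{\|z\|^{s}+\epsilon}\,z$, evaluated at differences of positions ($x_i-x_j$, $x_i-y_k$, $y_k-y_j$, $y_k-x_c$, $y_k-x_j$). I would argue these are locally Lipschitz by noting that (i) $z\mapsto\|z\|^{s}$ is locally Lipschitz for $s\ge 1$, being the composition of the $1$-Lipschitz norm with $t\mapsto t^{s}$, which is Lipschitz on each compact subinterval of $[0,\infty)$; (ii) the denominators are bounded below by $\epsilon>0$, so $z\mapsto (\|z\|^{s}+\epsilon)^{-1}$ is locally Lipschitz; and (iii) products and reciprocals-with-nonvanishing-denominator of locally Lipschitz functions are locally Lipschitz. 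The centroid maps $\mathbf X\mapsto x_c$ and $\mathbf V\mapsto v_c$ are linear, hence smooth, so precomposing with them preserves the local Lipschitz property; the velocity-alignment terms have coefficients bounded by $(r^{p}+r^{q})/\epsilon$ and depend linearly on the $v_i-v_j$; and each of \eqref{eq:center}, \eqref{eq:nearest} is a finite sum of terms of the above type. Since a finite sum of locally Lipschitz maps is locally Lipschitz, we conclude $b\in C(\mathbb R^{n};\mathbb R^{n})$ is locally Lipschitz, in particular Borel measurable.

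With $b$ locally Lipschitz and $\Sigma$ constant, the conclusion follows from the standard localization argument: for each integer $m\ge 1$ choose a globally Lipschitz $b_m$ agreeing with $b$ on the ball $\{\,|z|\le m\,\}$, solve $dZ^m=b_m(Z^m)\,dt+\Sigma\,dW$ globally, set $\tau_m=\inf\{t\ge 0:\ |Z^m(t)|\ge m\}$, and use pathwise uniqueness to check the $Z^m$ coincide on $[0,\tau_m)$; then $\tau:=\lim_{m\to\infty}\tau_m$ is a stopping time and $Z:=Z^m$ on $[0,\tau_m)$ defines the unique solution on $[0,\tau)$. On $\{\tau<\infty\}$ one has $\limsup_{t\uparrow\tau}|Z(t)|=\infty$ by construction, i.e.\ $\tau$ is an explosion time, which is exactly the assertion. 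I expect the local Lipschitz verification to be the only non-routine step---and within it, the behavior of the $\|\cdot\|^{s}$-type forces at coincident configurations (where some $x_i=x_j$ or $y_k=y_j$), which is precisely where the regularization $\epsilon$ and the hypotheses $p,q,\theta,\theta_1,\theta_2>1$ are used; everything else is bookkeeping. As an aside, one can further check that $|b(Z)|\le C(1+|Z|)$ on all of $\mathbb R^{n}$, since every position-only drift term is bounded and the velocity dependence is affine with bounded coefficients; this linear-growth bound would upgrade the result to a global solution with $\tau=\infty$ a.s., but the statement as given already follows from local Lipschitz continuity alone.
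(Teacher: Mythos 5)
Your proposal is correct and follows essentially the same route as the paper's proof: the paper simply observes that the right-hand side of \eqref{eq:dynamic_predation_system} is locally Lipschitz continuous (thanks to the regularization by $\epsilon$) and invokes the standard local existence--uniqueness theorem for SDEs, citing Friedman, which is exactly the argument you spell out in detail via truncation and localization. Your closing remark that the drift in fact has linear growth (so that $\tau=\infty$ a.s.) goes beyond what the paper claims, but as you note the theorem as stated already follows from local Lipschitz continuity alone.
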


\begin{proof}
The right-hand side of \eqref{eq:dynamic_predation_system} consists of functions that are locally Lipschitz continuous in 
\[
\mathbb{R}^{Nd} \times \mathbb{R}^{Nd} \times \mathbb{R}^{Md} \times \mathbb{R}^{Md}.
\]
By standard results for stochastic  differential equations, this ensures the existence and uniqueness of a local solution up to the maximal interval of existence $[0,\tau)$; see, for example, \citep{Friedman2006}.
\end{proof}

\paragraph{Initial conditions}  
The system \eqref{eq:dynamic_predation_system} is initialized with
\[
(x_1(0),\dots,x_N(0), v_1(0),\dots,v_N(0), y_1(0),\dots,y_M(0), u_1(0),\dots,u_M(0)) \in  \mathbb R^{Nd}\times \mathbb R^{Nd}\times \mathbb R^{Md}\times \mathbb R^{Md}.
\]
To ensure biologically realistic prey schools, initial prey states are generated by first simulating the prey-only system (no predators) until the fish form a cohesive school. This stable configuration is then used as the starting condition for predator–prey simulations.

\paragraph{Eaten condition}  
A prey fish is considered “eaten” once it enters a critical capture radius around any predator. If multiple prey simultaneously meet this condition, they are all removed from the system.

Formally, let $\tau_i^{N,M}$ denote the stopping time at which the $i$-th prey is captured by one of the $M$ predators:
\[
\tau_i^{N,M} = \inf \left\{ t \geq 0 : \min_k \| y_k(t) - x_i(t) \| < d_{\text{capture}} \right\},
\]
where $d_{\text{capture}}$ is the predefined capture threshold. The earliest capture time across all prey is then
\[
\tau^{N,M} = \min_i \tau_i^{N,M}.
\]

As prey are captured, their population decreases while predator numbers remain constant. Over time, the system evolves from $N$ prey and $M$ predators to $(N-N^*)$ prey and $M$ predators, where $N^*$ is the number of prey removed.  

This process is implemented in Algorithm~\ref{alg:scheme3}, which iteratively updates positions and velocities while checking whether prey satisfy the eaten condition at each simulation step.

\begin{algorithm}
\caption{Euler scheme for simulating predator–prey interactions.}
\label{alg:scheme3}
\textbf{Input:} 
$N$ (number of prey), 
$M$ (number of predators), 
$t$ (time vector), 
$x_0, v_0$ (initial prey positions and velocities), 
$y_0, u_0$ (initial predator positions and velocities), 
$\text{attack\_mode} \in \{\text{center}, \text{nearest}\}$ \\
\textbf{Output:} 
Trajectories of prey and predators, capture log
\begin{algorithmic}[1]
\Function{EulerMethod}{$N, M, t, x_0, v_0, y_0, u_0, \text{attack\_mode}$}
    \State Initialize arrays for prey states $(x,v)$ and predator states $(y,u)$
    \State Set initial conditions $(x(0),v(0),y(0),u(0)) \gets (x_0,v_0,y_0,u_0)$
    \State Initialize capture log $\mathcal{L} \gets \emptyset$, captured set $\mathcal{C} \gets \emptyset$

    \For{each time step $n=1,\dots, |t|$}
        \State Compute prey forces $F^{\text{prey}}$
        \If{$\text{attack\_mode} = \text{center}$}
            \State Compute predator forces $F^{\text{pred}}$ using center-attack rule (Eq.~\ref{eq:center})
        \Else
            \State Compute predator forces $F^{\text{pred}}$ using nearest-attack rule (Eq.~\ref{eq:nearest})
        \EndIf
        \State Update prey states $(x,v)$ via Euler scheme with noise 
        \State Update predator states $(y,u)$ via Euler scheme with noise 

        \For{each predator $k$ and prey $j \notin \mathcal{C}$}
            \If{$\|y_k - x_j\| < d_{\text{capture}}$}
                \State Mark prey $j$ as captured: $\mathcal{C} \gets \mathcal{C} \cup \{j\}$
                \State Log capture event $(t_n, j, k)$ in $\mathcal{L}$
            \EndIf
        \EndFor
    \EndFor
    \State \Return trajectories $(x,v,y,u)$ and capture log $\mathcal{L}$
\EndFunction
\end{algorithmic}
\end{algorithm}

\section{Predation Patterns}
\label{PredationPatternsBetweenPredatorandPreySchools}
We first examine the emergent predation patterns arising from the model \eqref{eq:dynamic_predation_system}. By varying parameters under the two hunting strategies and capture rules, the simulations reveal distinct behavioral patterns that closely mirror predator–prey interactions observed in nature.

The baseline setup consists of a prey school of \( N = 40 \) individuals and a predator group of \( M = 3 \) in a three-dimensional domain \(\mathbb{R}^3\). In one case, to better illustrate the surrounding phenomenon, the number of predators is increased to \( M = 6 \). Noise levels are set to \( \sigma_k = 0.016 \) for predators and \( \sigma_i = 0.015 \) for prey. Two strategies are considered—center attack and nearest attack—over a total simulation time of \( t_{\max} = 3000 \) with time step \( \Delta t = 0.01 \).

Across these settings, seven characteristic predation and avoidance patterns emerge (Table~\ref{tab:pattern_strategies}). The detailed parameter choices for each case are provided in Table~\ref{tab:pattern_parameters_detailed}. Below, we describe each pattern in turn, highlighting both the simulation dynamics and their ecological relevance.

\begin{table}[H]
\centering
\begin{tabular}{@{}ll@{}}
\toprule
\text{Pattern}   & \text{Strategy} \\ \midrule
1. Pursuit Failure         & I, II \\
2. Dispersal by Predators         & I \\
3. Selective Pursuit After Dispersal        & II \\
4. Fragmentation and Regrouping       & I \\
5. Edge Dispersal and Selfish Escape        & II \\
6. Oscillatory Predation with Cohesive Defense       & I, II \\
7. Predator Surrounding        & I, II \\ \bottomrule
\end{tabular}
\caption{Predation strategies associated with each observed pattern.}
\label{tab:pattern_strategies}
\end{table}

\begin{table}[H]
\centering
\begin{tabular}{@{}llccccccccccccc@{}}
\toprule
\text{Pattern} & \text{Strategy} & $\alpha$ & $\beta$ & $p$ & $q$ & $\gamma_1$ & $\gamma_2$ & $\delta$ & $\theta$ & $\theta_1$ & $\theta_2$ & $R$ & $R_1$ & $R_2$ \\ \midrule
1 & I  & 20  & 0.5  & 2 & 5 & 10 & 24  & 0.05 & 1   & 1 & 1   & 2 & 2   & 2 \\
  & II & 15  & 0.5  & 2 & 5 &  8 & 20  & 0.05 & 1   & 1 & 1   & 2 & 2   & 2 \\ \midrule
2 & I  & 0.7 & 0.03 & 4 & 6 &  5 &  1  & 5    & 1.5 & 2 & 1   & 2 & 5   & 2 \\ \midrule
3 & II & 0.1 & 0.05 & 2 & 5 &  3 &  1  & 1.2  & 0.9 & 1 & 2   & 2 & 2   & 10 \\ \midrule
4 & I  & 0.7 & 0.03 & 4 & 6 &  5 &  1  & 5    & 1.5 & 2 & 1   & 2 & 2   & 2 \\ \midrule
5 & II & 0.9 & 0.03 & 4 & 6 &  5 & 1.2 & 5    & 1.5 & 2 & 1   & 2 & 2   & 2 \\ \midrule
6 & I  & 10  & 0.1  & 4 & 5 & 1.3& 0.2 & 0.1  & 2   & 1 & 0.3 & 2 & 2   & 2 \\
  & II &  1  & 0.1  & 4 & 5 & 1.5& 0.2 & 0.1  & 2   & 1 & 0.3 & 2 & 2   & 2 \\ \midrule
7 & I  & 10  & 0.1  & 4 & 5 & 1.3& 0.2 & 0.1  & 2   & 1 & 0.3 & 10& 2   & 2 \\
  & II & 10  & 0.1  & 4 & 5 & 1.3& 0.2 & 0.1  & 2   & 1 & 0.3 & 10& 2   & 2 \\ \bottomrule
\end{tabular}
\caption{Parameter settings for each predation pattern.}
\label{tab:pattern_parameters_detailed}
\end{table}

\noindent\textbf{Pattern 1: Pursuit Failure.}  

This pattern is evident when the prey school successfully maintains a safe distance from the predators, occurring with both the center attack and nearest attack strategies (see Figure~\ref{fig:pursuit_failure}).

In both the center attack strategy (first row of Figure~\ref{fig:pursuit_failure}) and the nearest attack strategy (second row of Figure~\ref{fig:pursuit_failure}), the prey school exhibits high cohesion and synchronized motion to evade predation. The school dynamically adjusts its position and demonstrates polarized movements in response to approaching predators, consistently maintaining separation.

This observed pattern, characterized by the prey school's synchronized and polarized evasive maneuvers, aligns closely with the findings of Katz et al. \cite{Katz2011}, who demonstrated that collective interactions within fish schools are crucial for maintaining group stability and effectively evading predation pressure.

\begin{figure}[H]
    \centering

    \includegraphics[scale=0.2]{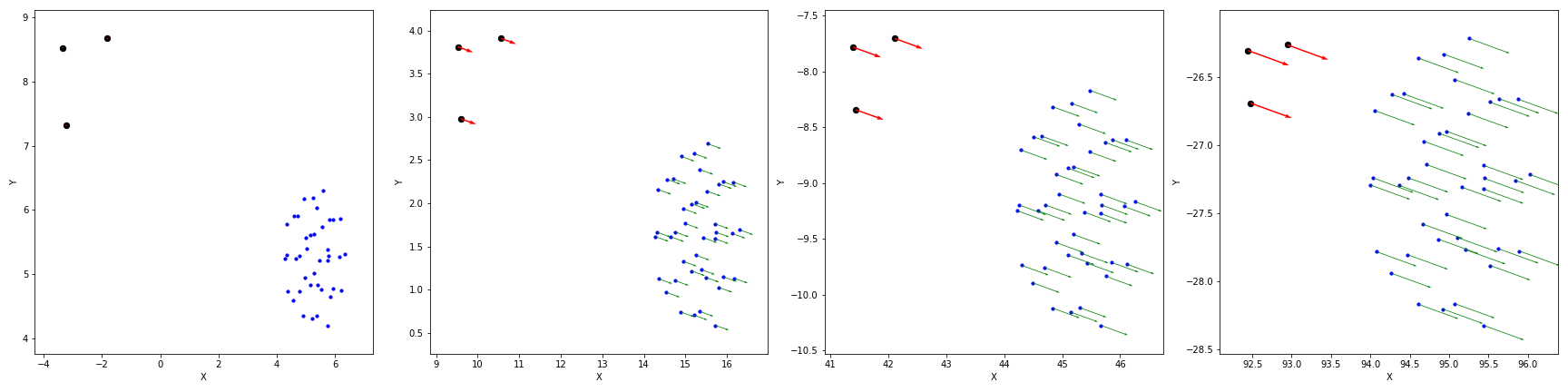}
    \caption*{Center attack strategy. Time steps (left to right): 
    $t = 0$, $1000$, $1999$, $2999$.}
    \label{fig:center_attack}

    \vspace{0.2cm} 

    \includegraphics[scale=0.2]{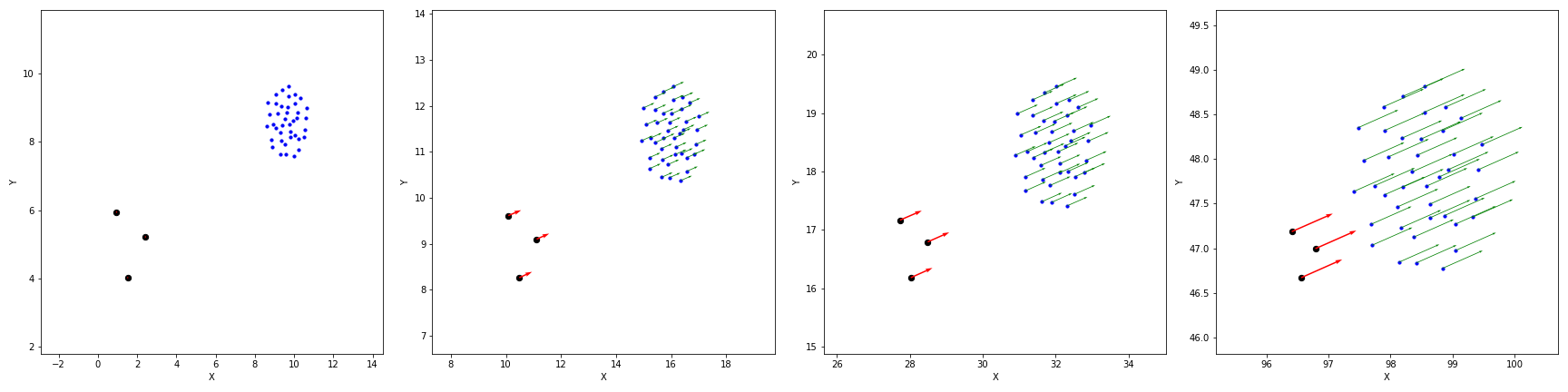}
\caption*{Nearest attack strategy. Time steps (left to right): 
    $t = 0$, $800$, $1500$, $2999$.}
    \label{fig:nearest_attack}

    \caption{Pattern 1: Pursuit failure under two predation strategies. 
    The predator school (black dots) fails to capture the prey school (blue dots) within the allotted timeframe.}
    \label{fig:pursuit_failure}
\end{figure}

\noindent\textbf{Pattern 2: Dispersal by Predators.}

This pattern (see Figure~\ref{fig:center_chongsan}) describes how predators, through collective movement, disrupt the cohesion of the prey school. Under the center attack strategy, predators focus on approaching the center of the prey school. The concentrated pressure from predators forces the prey school to break its original cohesive structure, causing prey individuals to disperse in multiple directions to evade predation.

This pattern is similar to that found in \cite{Pitcher1983}. Their study describes how predation pressure can lead to the fragmentation and dispersal of prey schools. Furthermore, the similar dispersal behavior observed in the prey school is closely related to the study by \cite{Couzin2005}, which highlight the role of decision-making and dispersal strategies in reducing predation risk.

\begin{figure}[H]
    \centering
    \includegraphics[scale=0.22]{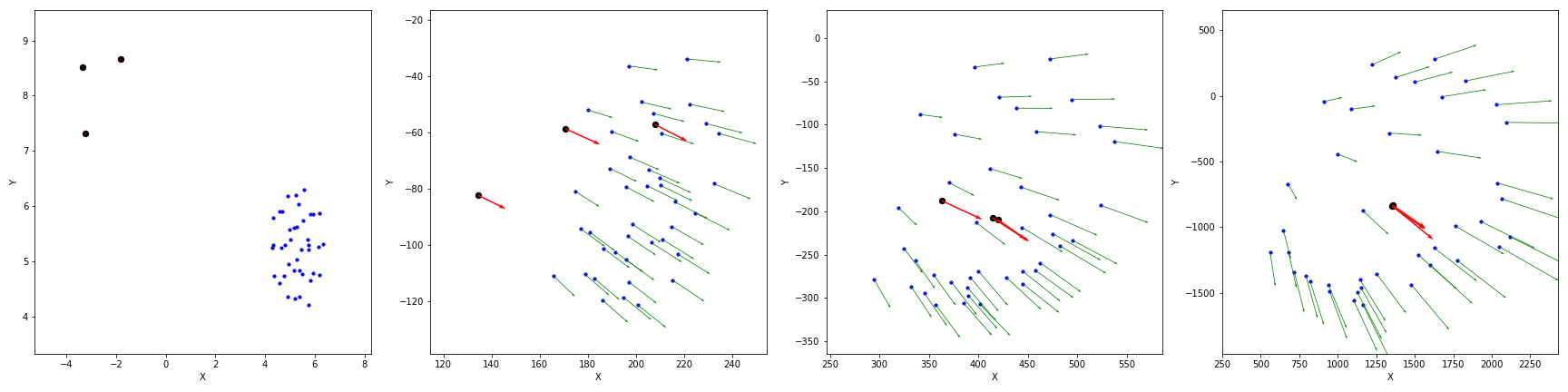}
   \caption{Pattern 2: Dispersal by predators under the center attack strategy. Predators (black dots) disrupt the cohesion of the prey school (blue dots), forcing individuals to scatter. Time steps (left to right): 
    $t = 0$, $650$, $1100$, $2999$.}
    \label{fig:center_chongsan}
\end{figure}

\noindent\textbf{Pattern 3: Selective Pursuit After Dispersal.} 

This pattern (see Figure~\ref{fig:pattern3}) describes the predator behavior following the successful dispersal of the prey school. Under the nearest attack strategy, predators move from collective hunting to individual hunting, with each predator independently selecting and pursuing the nearest prey. 

This pattern is consistent with Hamilton's selfish herd theory \cite{Hamilton1971}. The study by Pitcher and Parrish \cite{Pitcher1993} describes the importance of cohesion in defending against predation pressure, with dispersal significantly weakening defenses. Finally, Couzin et al. \cite{Couzin2005} explored how predators optimize their hunting strategies, particularly by implementing targeted attacks after prey dispersal.

\begin{figure}[H]
    \centering
    \includegraphics[scale=0.26]{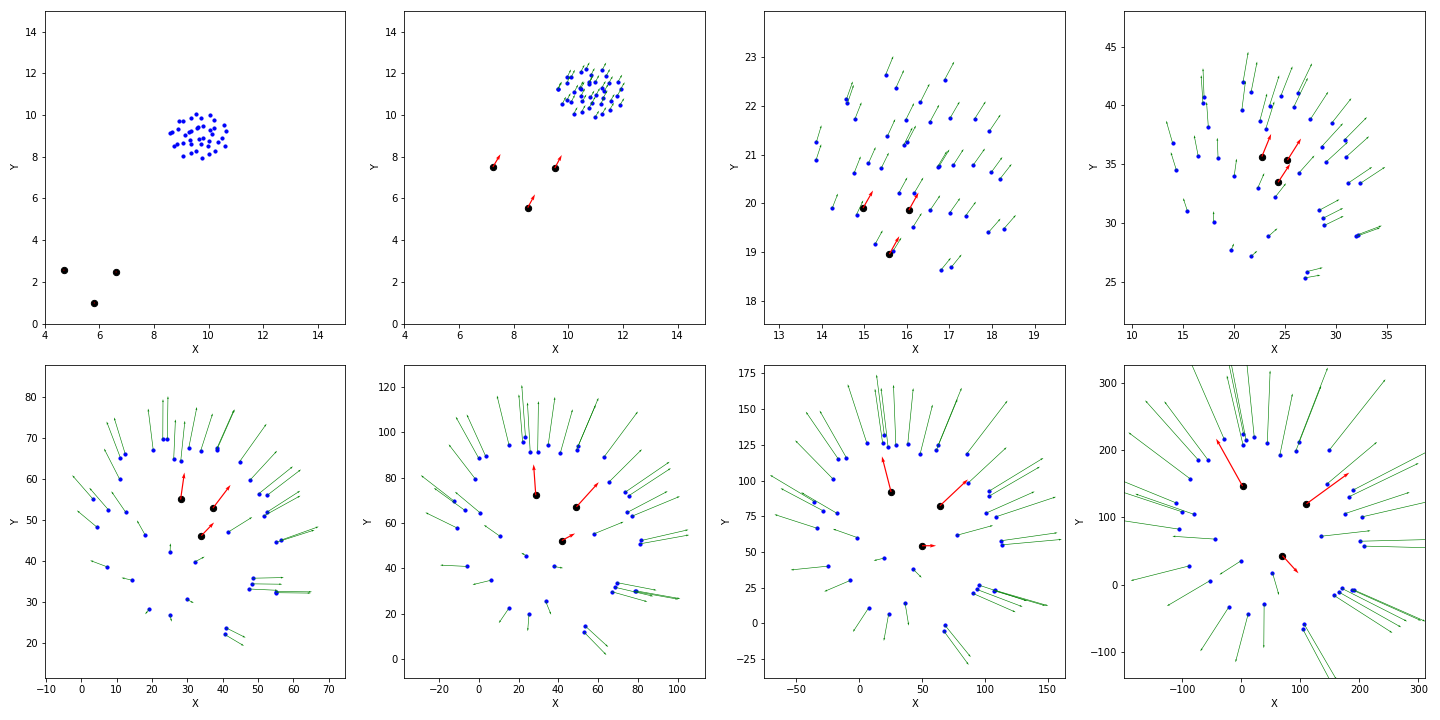}
   \caption{Pattern 3: Selective pursuit after dispersal under the nearest attack strategy. After dispersing the prey school (blue dots), predators (black dots) pursue the nearest prey individually, resulting in decentralized hunting. Time steps (left to right, top to bottom): 
    $t = 0$, $100$, $240$, $370$, $500$, $600$, $700$, $920$.}
    \label{fig:pattern3}
\end{figure}

\noindent\textbf{Pattern 4:  Fragmentation and Regrouping.} 

This pattern (see Figure~\ref{fig:pattern4_fragmentation}) represents a dynamic interaction under the center attack strategy. The predators disrupt the prey school, causing it to fragment into smaller clusters. After this disruption, predators adapt their behavior by dispersing to hunt these smaller prey clusters individually. In the final phase, predators regroup to target a single remaining prey cluster. Studies on fish shoaling behavior have shown that fragmentation often occurs when predators apply concentrated pressure to prey schools \cite{Pitcher1983, Brierley2010}. Such situations and dynamics are similar to those found in these studies \cite{Couzin2003, Freon1999, Herbert-Read2011}.

\begin{figure}[H]
    \centering
    \includegraphics[scale=0.22]{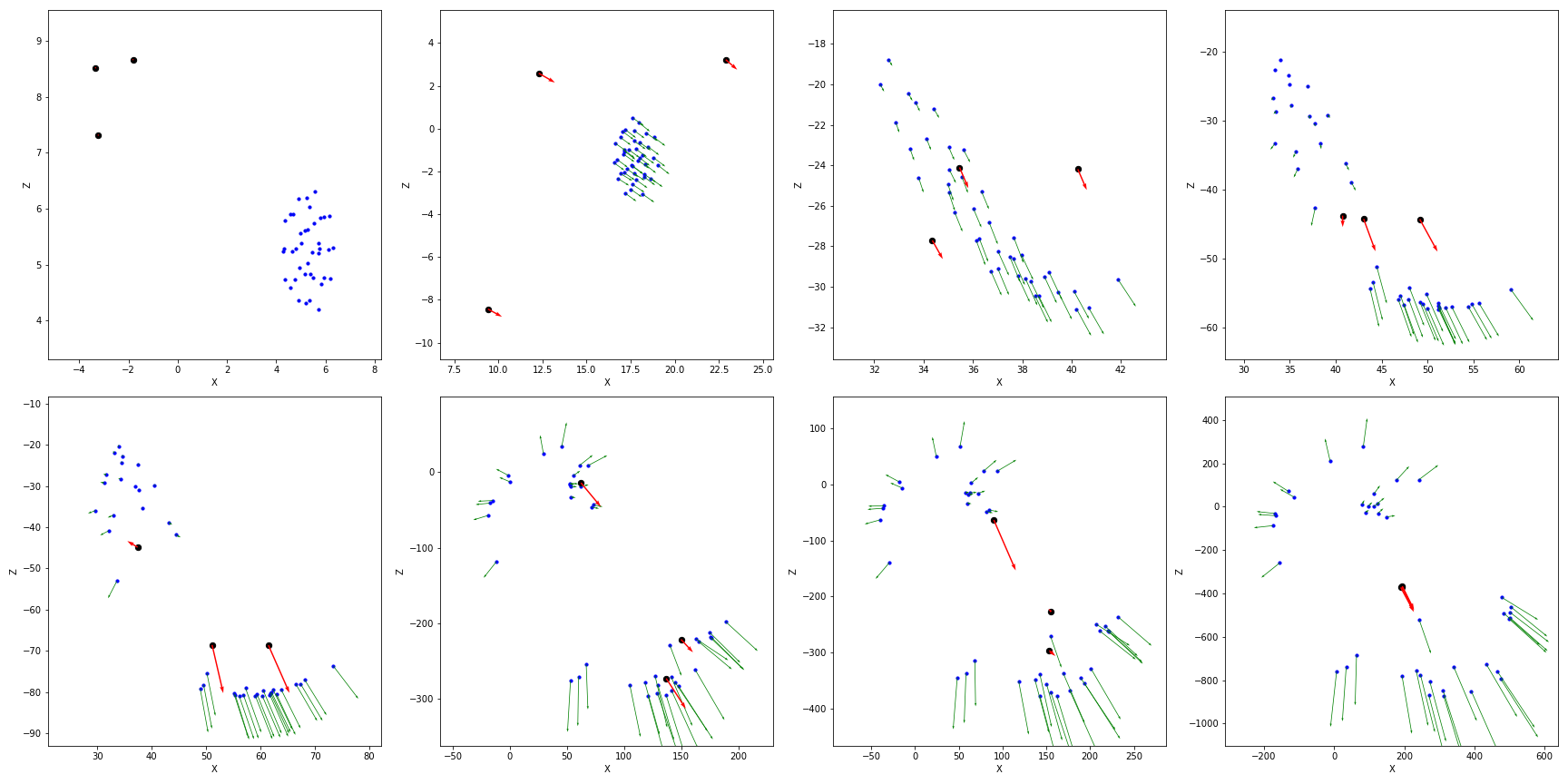}
   \caption{Pattern 4: Fragmentation and regrouping. Predators (black dots) fragment the prey school (blue dots), then engage in dispersed hunting before regrouping to target a single cluster. Time steps (left to right, top to bottom): 
    $t = 0$, $300$, $600$, $740$, $840$, $1500$, $1720$, $2999$.}
    \label{fig:pattern4_fragmentation}
\end{figure}

\noindent\textbf{Pattern 5: Edge Dispersal and Selfish Escape.}  

Under the nearest attack strategy, predators focus their attention on the nearest prey individuals at the edges of the prey school. This caused higher predation pressure on those individuals. As a result, edge prey may respond with a behavior known as selfish escape, where individuals break away from the prey school to evade predation (see Figure~\ref{fig:pattern5_selfish_escape}). 

The dynamics of this pattern are rooted in the selfish herd theory proposed by Hamilton  \cite{Hamilton1971}, which suggests that individuals in a group compete for safer positions, such as the center, to reduce predation risk. For prey at the edges, the intense pressure may make joining the core infeasible, leaving escape as the optimal strategy.

\begin{figure}[H]
    \centering
    \includegraphics[scale=0.24]{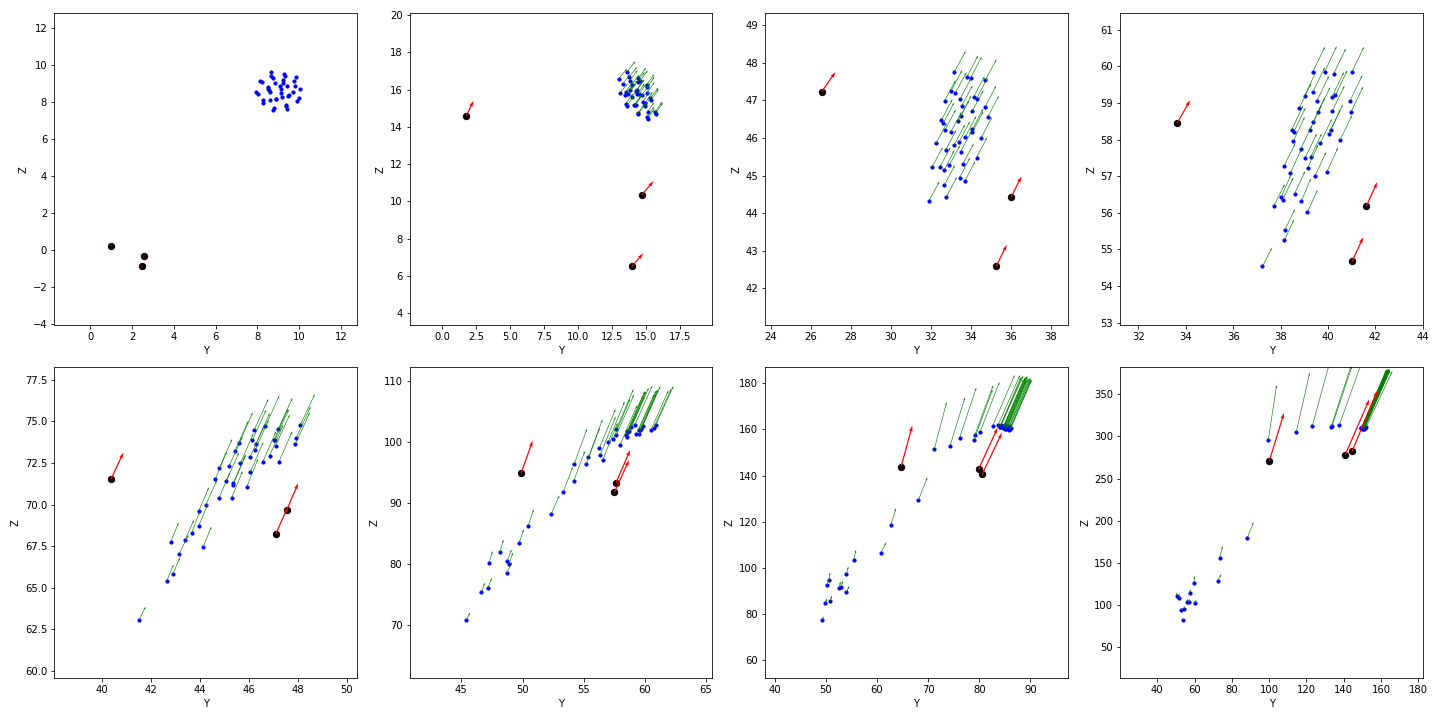}
   \caption{Pattern 5: Edge dispersal and selfish escape under the nearest attack strategy. Predators (black dots) exert pressure on edge prey, causing individuals to peel away from the school (blue dots) in selfish escape attempts. Time steps (left to right, top to bottom): 
    $t = 0$, $200$, $500$, $580$, $660$, $780$, $980$, $1400$.}
    \label{fig:pattern5_selfish_escape}
\end{figure}

\noindent\textbf{Pattern 6: Oscillatory Predation with Cohesive Defense.}  
This pattern describes a high-pressure dynamic where the predators engage in repeated, oscillatory attacks while the prey school's cohesion and structural stability remain robust. This is observed across both the center attack (first panel of Figure~\ref{fig:oscillatory_dynamics}) and nearest attack (last two rows of Figure~\ref{fig:oscillatory_dynamics}) strategies.

The predators execute a strategy of oscillatory predation, repeatedly adjusting their attack vectors in an attempt to disorient and penetrate the formation—a dynamic observed in natural events such as the bait-ball attacks by sharks and tuna, as documented by Pitcher \cite{Pitcher1983}. Despite this intense, multi-directional pressure, the prey school's synchronized movements allow it to maintain its structural integrity.

The persistent cohesion under active attack supports both the Selfish Herd Theory \cite{Hamilton1971}, as individuals reduce risk by remaining within the group core, and the conclusions of Couzin et al. \cite{Couzin2005}, who identified synchronized movement as a critical defense mechanism that allows groups to withstand active predation without compromising overall structure.

\begin{figure}[H]
    \centering

    \includegraphics[scale=0.2]{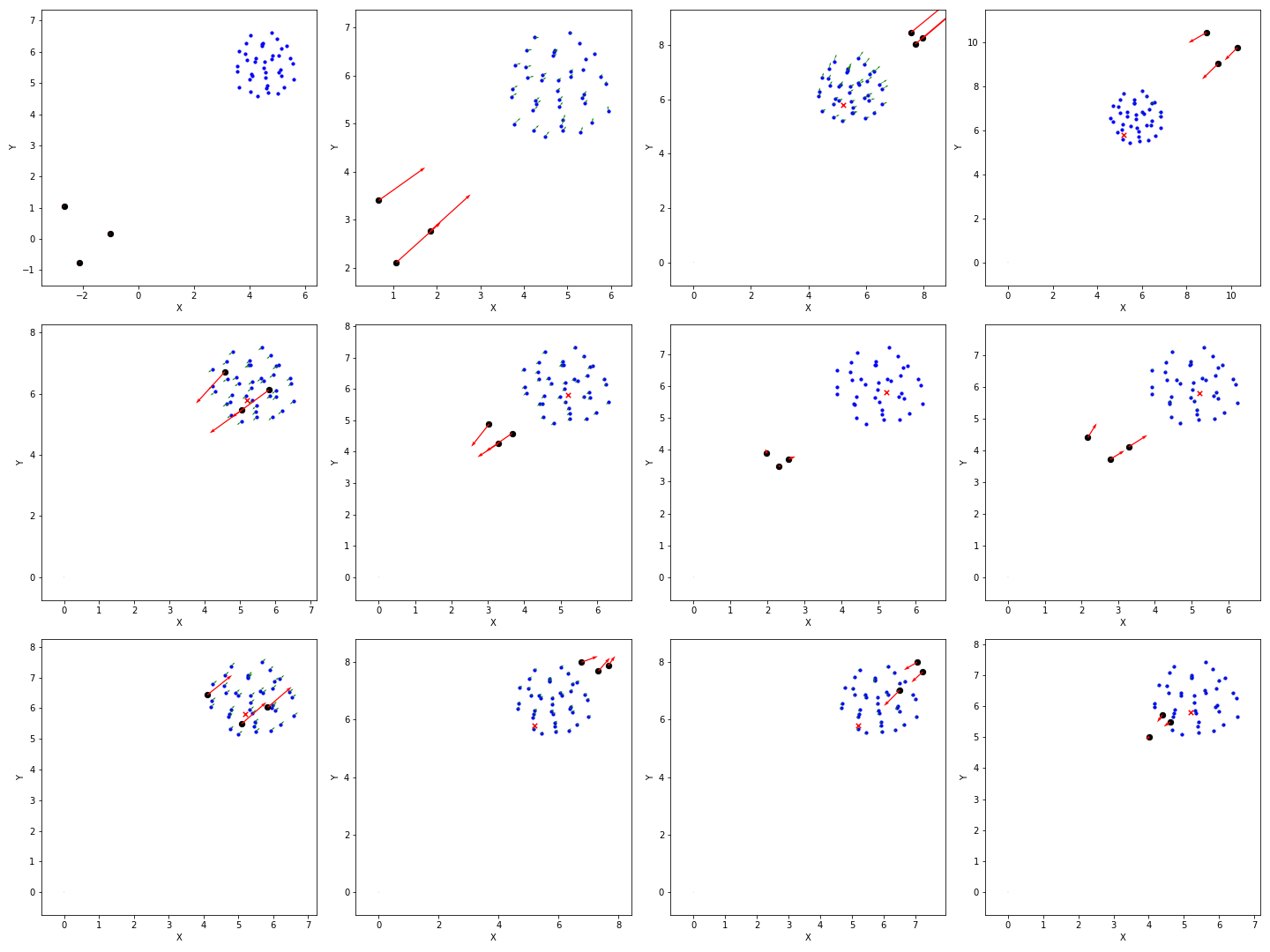}
    \caption*{Center attack strategy. Time steps (left to right, top to bottom): 
    $t = 0$, $120$, $250$, $410$, $550$, $610$, $710$, $760$, $860$, $965$, $1120$, $1300$.}
    \label{fig:center_attack_oscillation}
    
    \vspace{0.25cm} 

    \includegraphics[scale=0.2]{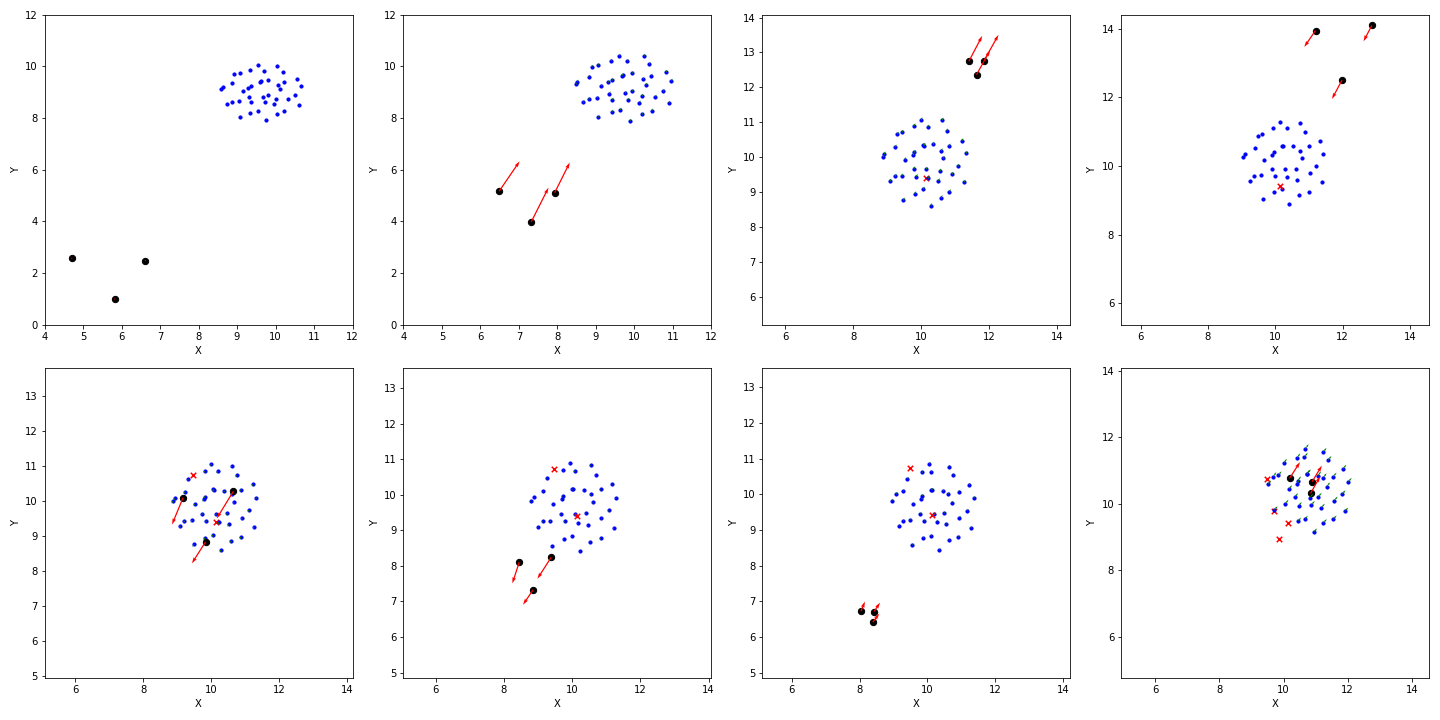}
    \caption*{Nearest attack strategy. Time steps (left to right, top to bottom): 
    $t = 0$, $100$, $240$, $400$, $500$, $550$, $700$, $850$.}
    \label{fig:nearest_attack_oscillation}
    
   \caption{Pattern 6: Oscillatory predation dynamics under center and nearest attack strategies. 
    The prey school (blue dots) resists disruption while predators (black dots) apply persistent pressure. 
    Red crosses mark prey captured during each event.}
    \label{fig:oscillatory_dynamics}
\end{figure}

\noindent\textbf{Pattern 7: Predator Surrounding.}

The pattern is characterized by the predators forming a stable, encircling structure around the prey school, a behavior observed under both the center attack (first panel) and nearest attack (second panel) strategies (Figure~\ref{fig:surrounding_behavior}).

In this pattern, the predators coordinate their movements to establish and maintain a tight, surrounding formation, thereby applying significant spatial pressure and restricting the prey school's potential escape vectors. In response, the prey demonstrate a high degree of strong cohesion and stable formation, effectively utilizing this tight group structure to resist the encompassing pressure and delay capture.

This surrounding phenomenon is a common cooperative predation strategy in natural marine ecosystems. For instance, Krause and Ruxton \cite{Krause2002} observed that predators like sharks and tuna employ coordinated, cooperative behaviors specifically to encircle prey schools, thereby eliminating their escape options. The prey school's successful resistance through tight cohesion reinforces findings by Couzin et al. \cite{Couzin2005}, which demonstrated that group cohesion and synchronized movements are critical mechanisms used by prey to mitigate individual exposure when faced with external spatial constraints.

\begin{figure}[H]
    \centering

    \includegraphics[scale=0.2]{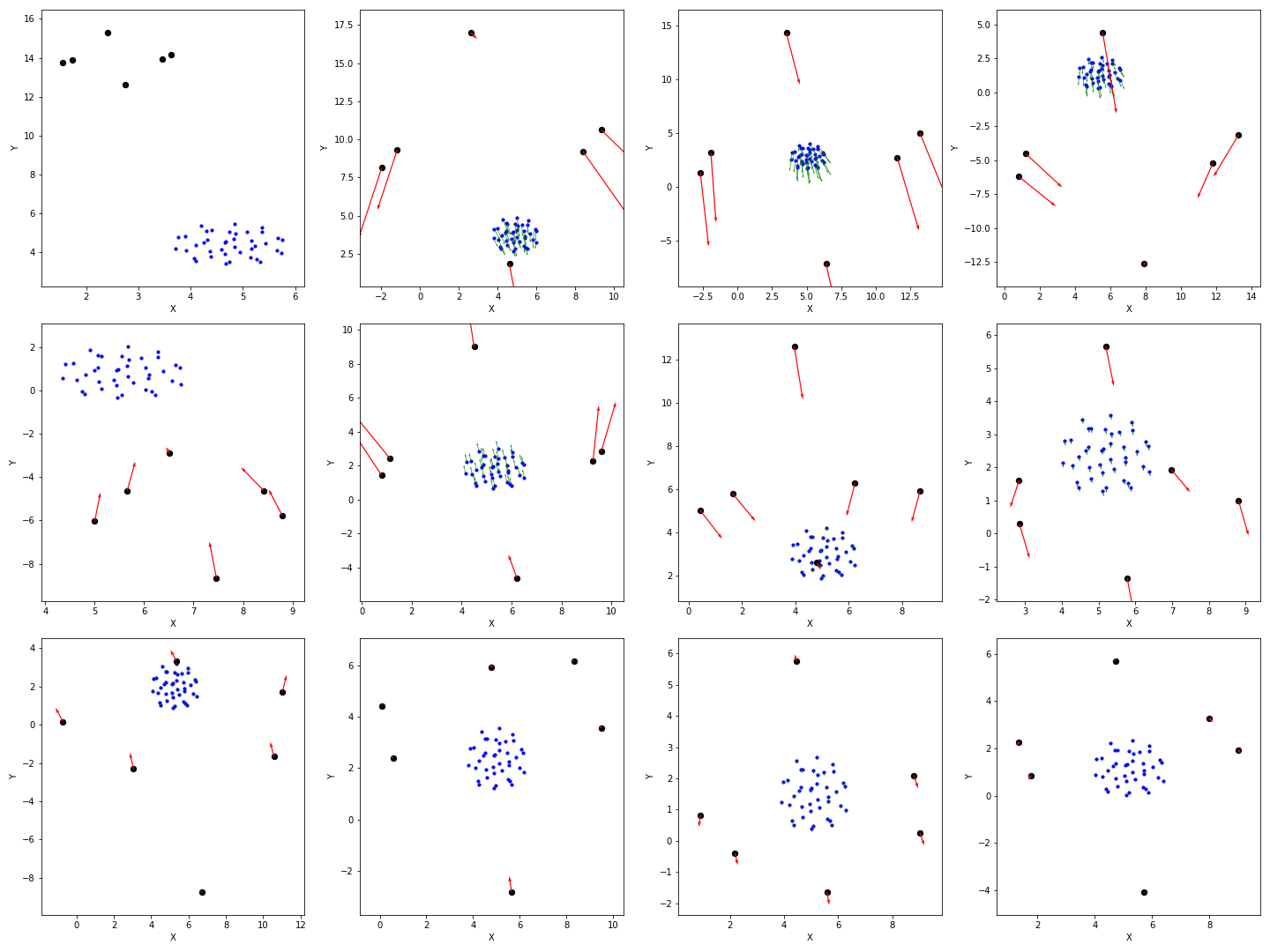}
   \caption*{Center attack strategy. Time steps (left to right, top to bottom): 
    $t = 0$, $120$, $200$, $320$, $450$, $600$, $880$, $1000$, $1200$, $2200$, $2500$, $2999$.}
    \label{fig:center_surrounding}
    
    \vspace{0.25cm} 

    \includegraphics[scale=0.2]{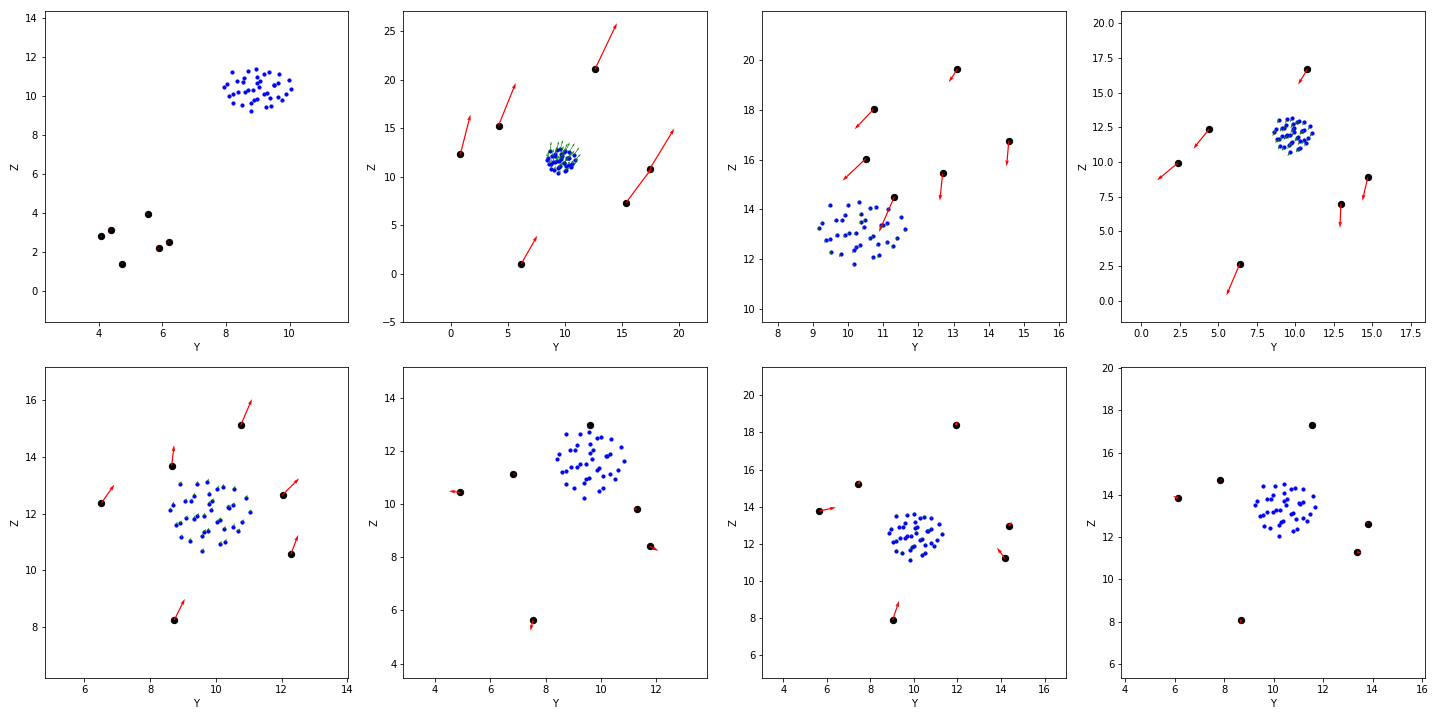}
   \caption*{Nearest attack strategy. Time steps (left to right, top to bottom): 
    $t = 0$, $200$, $500$, $660$, $980$, $1500$, $1800$, $2999$.}
    \label{fig:nearest_surrounding}

 \caption{Pattern 7: Predator surrounding under center and nearest attack strategies. 
    Predators (black dots) form a stable encircling formation around the prey school (blue dots), which resists capture through cohesion and stability.}
    \label{fig:surrounding_behavior}
\end{figure}

\section{Impact of predator school size on prey dynamics}
\label{Effectsofpredatorschoolsize}

In this section, we present  the second set of simulation results, focusing on how predator school size influences prey school dynamics. 

We fix the prey population at \( N = 300 \) and vary the number of predators from \( M = 1 \) to \( M = 20 \). For each predator school size, we perform 100 independent simulation trials and report average outcomes. 

To initialize the prey school, we first solve the prey-only system (without predators) until a cohesive group of \( N = 300 \) individuals emerges, with all prey positions stored. In each trial, a predator school of size \( M \) is then generated in the same manner. Predator positions vary across trials, introducing stochastic variability in addition to inherent system noise. This randomness reflects natural variability in real aquatic environments.  

To minimize bias from initial predator placement, we rotate the predator school (while preserving relative positions) until the standard deviation of predator-to-prey centroid distances is minimized. The two centroids are then placed 12 units apart. The pseudocode for this initialization procedure is provided in Algorithm~\ref{alg:scheme4}.

\begin{algorithm}[H]
\caption{Initialization of prey and predator positions for simulation}
\label{alg:scheme4}
\textbf{Input:} $N$: number of prey, $M$: number of predators, $x_{ps}, y_{pdts}$: initial positions of prey and predators, $\text{desired\_distance}$: target distance between predator and prey centers. \\
\textbf{Output:} $x_0$: prey initial positions, $y_0$: predator initial positions adjusted to maintain balanced distances from prey center.
\begin{algorithmic}[1]
\State $x_0 \gets x_{ps}[-1]$ \Comment{Initialize prey positions}
\State $v_0 \gets \mathbf{0}$ \Comment{Initialize prey velocities}
\State $c_x \gets \text{mean}(x_0, \text{axis}=0)$ \Comment{Prey center}

\State $y_{\text{init}} \gets y_{pdts}[-1]$ \Comment{Initialize predator positions}
\State $c_y \gets \text{mean}(y_{\text{init}}, \text{axis}=0)$ \Comment{Predator center}
\State $r \gets y_{\text{init}} - c_y$ \Comment{Relative predator positions}

\State $\text{dir} \gets \dfrac{c_x - c_y}{\|c_x - c_y\|}$ \Comment{Direction from predator center to prey center}
\State $\text{new\_center} \gets c_x - \text{dir}\cdot \text{desired\_distance}$ \Comment{Shift predator center}
\State $y_0 \gets \text{new\_center} + r$ \Comment{Adjusted predator positions}

\State \textbf{assert} $\;\text{allclose}(r,\;y_0 - \text{mean}(y_0,\text{axis}=0))$ \Comment{Check relative positions preserved}

\Function{RotateToMinimizeDistanceVariance}{$y_0, c_x$}
    \State $d \gets \|y_0 - c_x\|$; $best\_std \gets \text{std}(d)$; $best\_y \gets y_0$
    \For{$\theta$ in evenly spaced angles on $[0,2\pi]$}
        \State $R \gets \begin{bmatrix}
        \cos\theta & -\sin\theta & 0 \\
        \sin\theta & \cos\theta & 0 \\
        0 & 0 & 1
        \end{bmatrix}$
        \State $y_{\text{rot}} \gets \text{new\_center} + R \cdot r$
        \State $d_{\text{rot}} \gets \|y_{\text{rot}} - c_x\|$
        \If{$\text{std}(d_{\text{rot}}) < best\_std$}
            \State $best\_std \gets \text{std}(d_{\text{rot}})$; $best\_y \gets y_{\text{rot}}$
        \EndIf
    \EndFor
    \State \Return $best\_y,\; best\_std$
\EndFunction

\State $y_0, final\_std \gets \textsc{RotateToMinimizeDistanceVariance}(y_0, c_x)$
\State \Return $x_0, y_0$
\end{algorithmic}
\end{algorithm}

All simulations use the following parameters: 
\[
\alpha = 10,\; \beta = 0.1,\; r = 1,\; p = 4,\; q = 5,\; 
\gamma_1 = 1,\; \gamma_2 = 0.1,\; R = 2,\; R_1 = 2,\; R_2 = 2,\;
\theta = 2,\]
\[
 \theta_1 = 1,\; \theta_2 = 0.3,\; \delta = 0.1,
\]
with a time step of \( \Delta t = 0.01 \) and total simulation time of 40 units (4000 steps). Parallel computing accelerates the process by approximately tenfold (Algorithm~\ref{alg:scheme5}). 

We analyze the following metrics to evaluate predation efficiency and prey outcomes:
\begin{itemize}
    \item \textbf{Average Total Prey Captured:} This represents the average total number of prey captured by all predators across all simulation trails:
    \[
    \bar{F} = \frac{1}{N_\text{sim}} \sum_{i=1}^{N_\text{sim}} F_i
    \]
    where $N_\text{sim}$ is the total number of simulation trails, and $F_i$ is the total number of prey captured in the $i$-th simulation trail.

    \item \textbf{Average Max–Min Difference:} This indicates the average difference between individuals in a predator school capturing the most prey and capturing the least prey:
    \[
    \overline{\Delta F} = \frac{1}{N_\text{sim}} \sum_{i=1}^{N_\text{sim}} \left( \max(P_i) - \min(P_i) \right)
    \]
    where $P_i$ is the set of prey capture counts for all predators in the $i$-th simulation trail.

    \item \textbf{Average Standard Deviation:} This quantifies the variation in prey capture counts among predators, averaged over all simulation trails:
    \[
    \bar{\sigma} = \frac{1}{N_\text{sim}} \sum_{i=1}^{N_\text{sim}} \sigma(P_i)
    \]
    where $\sigma(P_i)$ is the standard deviation of prey capture counts in the $i$-th simulation trail.

    \item \textbf{Average Prey Captured per Predator:} This represents the average number of prey captured by a single predator across all simulation trails:
    \[
    \bar{\mu} = \frac{1}{N_\text{sim}} \sum_{i=1}^{N_\text{sim}} \mu(P_i)
    \]
    where $\mu(P_i)$ is the mean prey capture count for all predators in the $i$-th simulation trail.

    \item \textbf{Average Prey Survival Time:} This reflects the average survival time of prey during the simulations:
    \[
    \bar{T}_\text{alive} = \frac{1}{N_\text{sim}} \sum_{i=1}^{N_\text{sim}} T_i
    \]
    where $T_i$ is the average survival time of prey in the $i$-th simulation trail.

    \item \textbf{Average Prey Survival Rate:} This provides the average survival rate of prey across all simulations:
    \[
    \bar{R} = \frac{1}{N_\text{sim}} \sum_{i=1}^{N_\text{sim}} R_i
    \]
    where $R_i$ is the survival rate of prey in the $i$-th simulation trail.
\end{itemize}

\begin{algorithm}[H]
\caption{Parallel simulation of predator--prey dynamics and statistical analysis}
\label{alg:scheme5}
\textbf{Input:} $N, M$: number of prey and predators; $t$: time vector; $x_0, v_0$: initial prey positions and velocities; \\
\hspace{20pt} Simulation parameters: $\alpha, \beta, r, p, q, \gamma, R, \theta$, etc.; \\
\hspace{20pt} $num\_simulations$: total number of simulations. \\
\textbf{Output:} Summary metrics: $total\_eaten\_fish, dm, std, avg\_pred, survival\_rate, survival\_time$.
\begin{algorithmic}[1]
\State Initialize parameter sets: $noise\_levels$, $distances$, etc.
\State Define function \textsc{RunSimulationParallel} to compute prey--predator dynamics.
\Statex

\State \textbf{Run parallel simulations:}
\For{$sim = 1$ to $num\_simulations$}
    \State Simulate prey--predator interactions using $force\_function\_prey$ and $force\_function\_predator$.
    \State Record prey capture events and predator capture counts.
\EndFor
\Statex

\State \textbf{Process results:}
\State Initialize lists: $L_{eaten}, L_{dm}, L_{std}, L_{avg}, L_{survival\_time}, L_{survival\_rate}$.
\For{each simulation result}
    \State Extract $total\_eaten\_fish, pred\_count, survival\_time, survival\_rate$.
    \State $dm \gets \max(pred\_count) - \min(pred\_count)$
    \State $std \gets \text{std}(pred\_count)$
    \State $avg\_pred \gets \text{mean}(pred\_count)$
    \State Append metrics to corresponding lists.
\EndFor
\Statex

\State \textbf{Summarize results:}
\State Compute averages:
\State \hspace{1em} $overall\_fish\_eaten \gets \text{mean}(L_{eaten})$
\State \hspace{1em} $overall\_dm \gets \text{mean}(L_{dm})$
\State \hspace{1em} $overall\_std \gets \text{mean}(L_{std})$
\State \hspace{1em} $overall\_avg\_pred \gets \text{mean}(L_{avg})$
\State \hspace{1em} $overall\_survival\_time \gets \text{mean}(L_{survival\_time})$
\State \hspace{1em} $overall\_survival\_rate \gets \text{mean}(L_{survival\_rate})$

\State Print details of individual simulations and global summary.
\State \Return all processed results and overall summary metrics.
\end{algorithmic}
\end{algorithm}

\subsection{Nearest Attack Strategy}

Figure~\ref{fig:nearest_avg} shows the relationship between predator school size (\(M\)) and total prey captured. For small groups ($1 \leq M \leq 3$), prey captures rise sharply as predators cooperate with limited competition. As group size increases ($4 \leq M \leq 20$), growth slows and eventually levels off due to limited prey availability (\(N=300\)). Internal competition increasingly constrains total captures, indicating diminishing returns to larger predator schools.

\begin{figure}[H]
    \centering
    \includegraphics[scale=0.3]{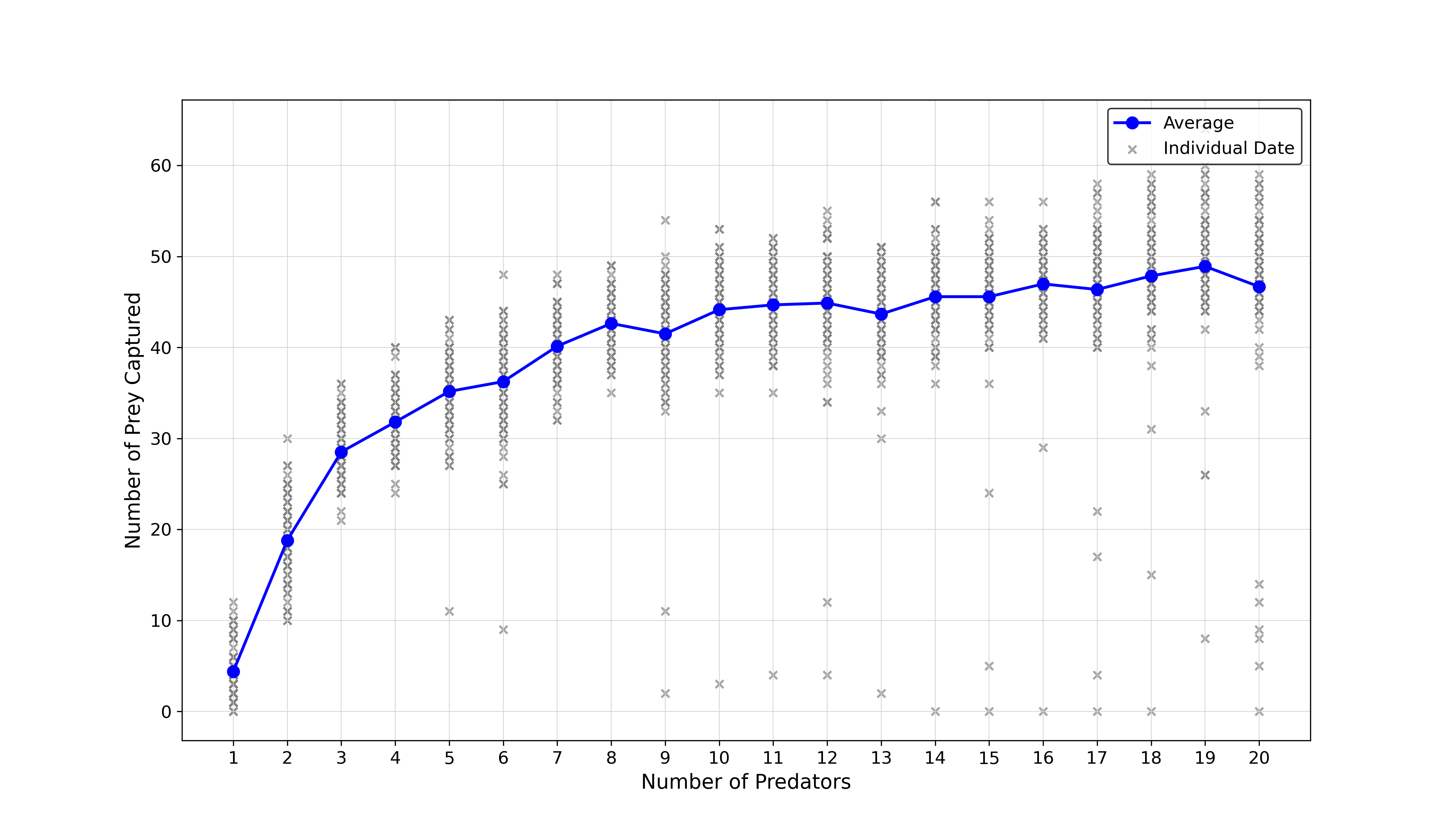}
    \caption{Expectation of Prey Captured by Predators under Nearest Attack Strategy}
    \label{fig:nearest_avg}
\end{figure}

Figure~\ref{fig:nearest_dm} shows the max–min difference in captures per predator. With few predators ($1 \leq M \leq 3$), differences rise rapidly, suggesting early competition. For larger groups, the difference stabilizes, as interference and limited prey balance predation opportunities across individuals. The standard deviation of captures (Figure~\ref{fig:nearest_std}) follows the same trend.

\begin{figure}[H]
    \centering
    \includegraphics[scale=0.3]{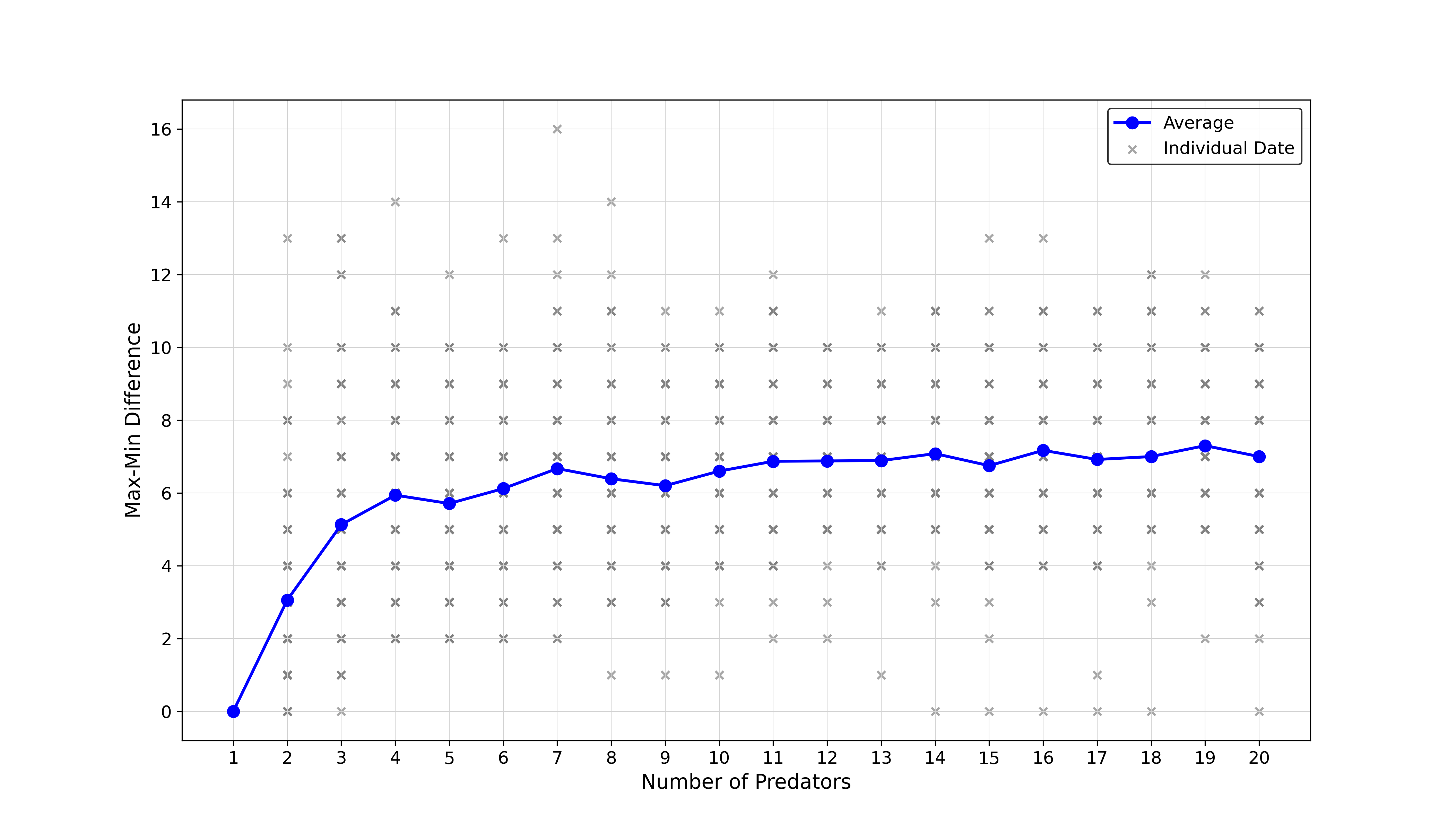}
    \caption{Max-Min Difference of Average Prey Captured by Predators under Nearest Attack Strategy}
    \label{fig:nearest_dm}
\end{figure}

\begin{figure}[H]
    \centering
    \includegraphics[scale=0.3]{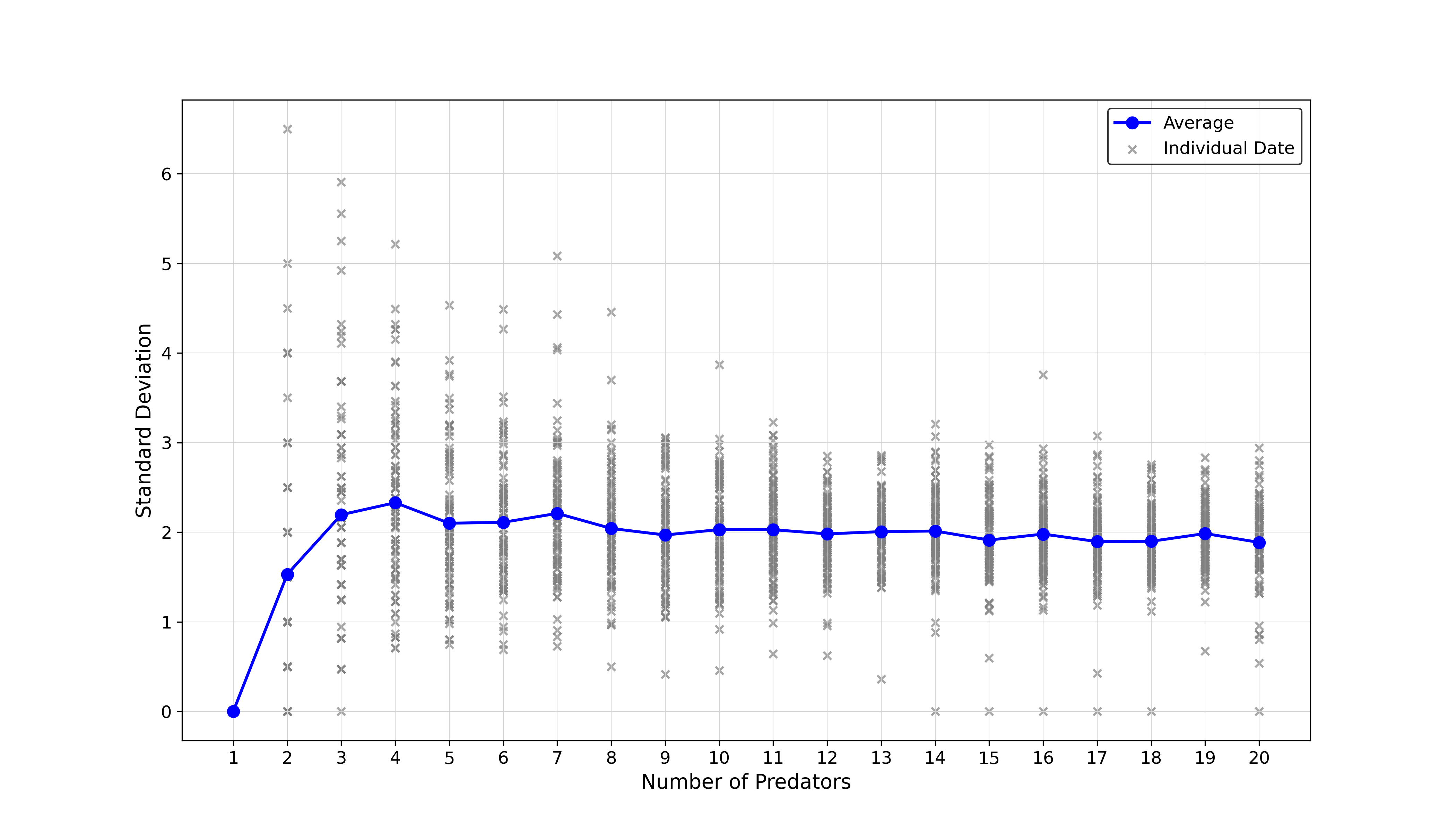}
    \caption{Standard Deviation of Average Prey Captured by Predators under Nearest Attack Strategy}
    \label{fig:nearest_std}
\end{figure}

The average prey captured per predator (Figure~\ref{fig:nearest_per}) rises briefly ($1 \leq M \leq 2$) due to cooperative hunting, but declines thereafter as competition dominates. Once group size exceeds 11, individual predators capture fewer prey on average than solitary hunters ($M=1$), indicating negative returns to cooperation. 

\begin{figure}[H]
    \centering
    \includegraphics[scale=0.3]{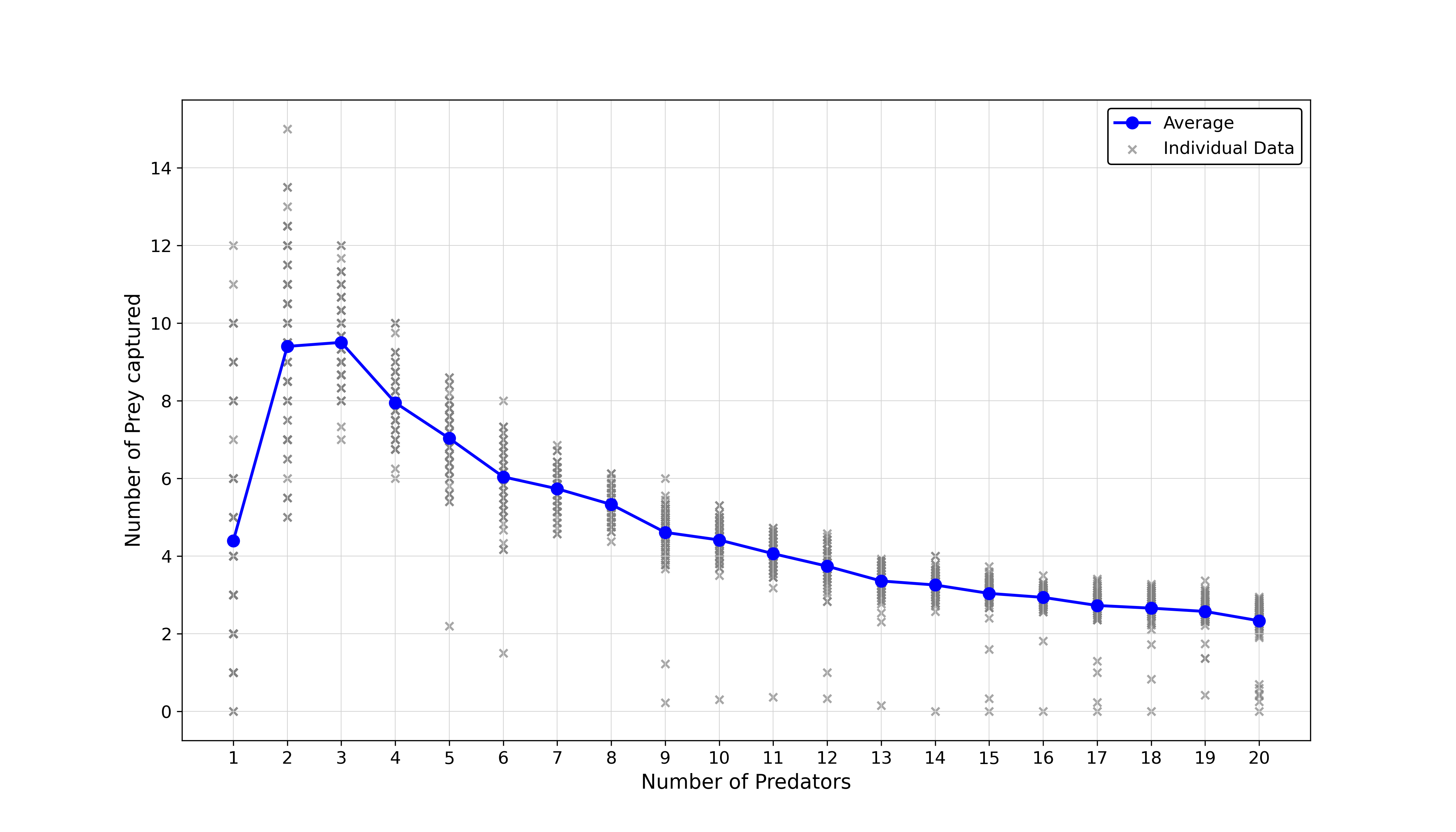}
    \caption{Average Number of Prey Captured per Predator under Nearest Attack Strategy}
    \label{fig:nearest_per}
\end{figure}

Figures~\ref{fig:nearest_time} and~\ref{fig:nearest_rate} show prey survival outcomes. Survival time and survival rate both decline steeply when predator numbers are small ($1 \leq M \leq 8$), but plateau for larger groups. Competition and interference reduce hunting efficiency, preventing further decreases.

\begin{figure}[H]
    \centering
    \includegraphics[scale=0.3]{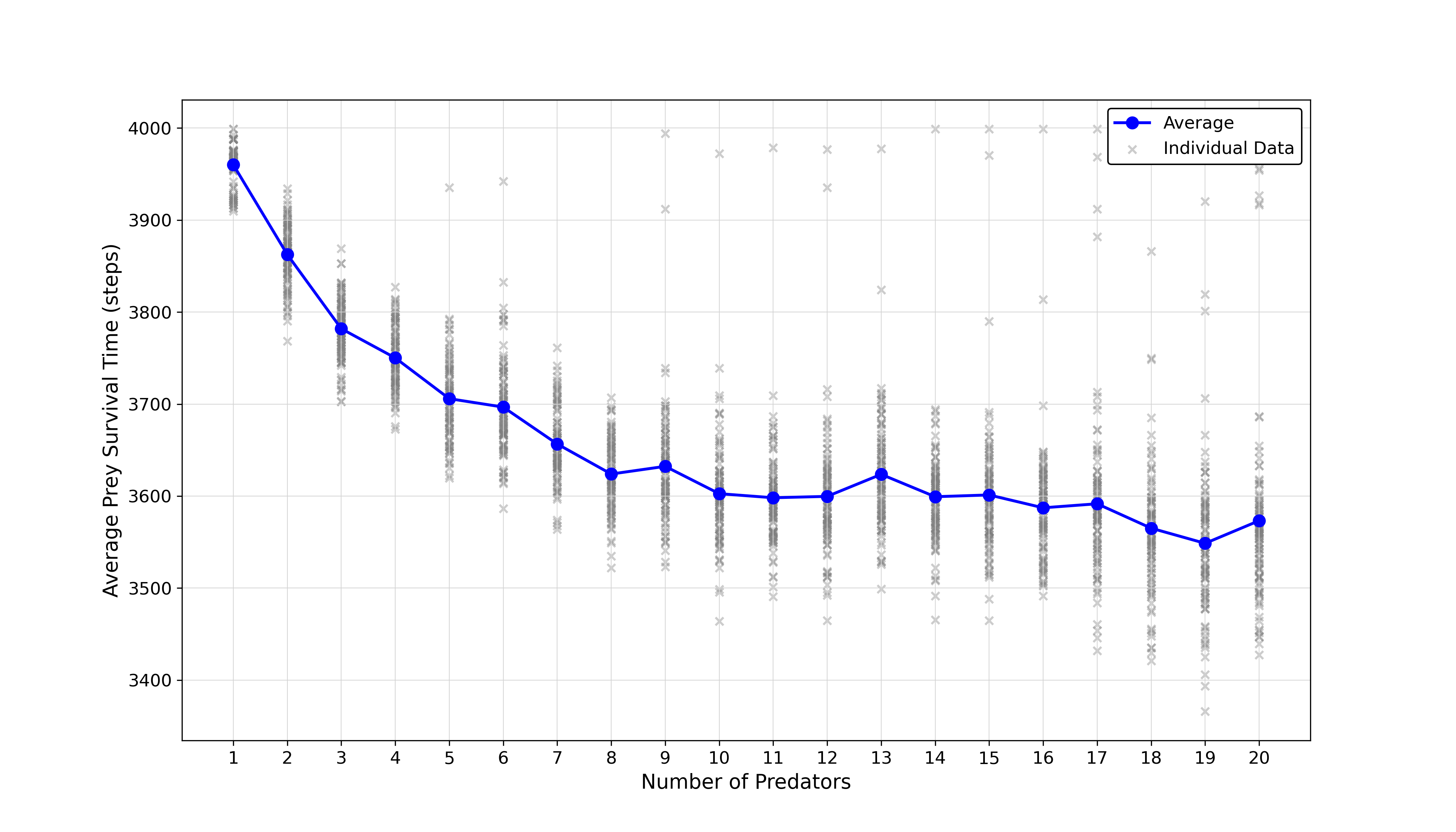}
    \caption{Prey Survival Time Across Different Predator Numbers under Nearest Attack Strategy}
    \label{fig:nearest_time}
\end{figure}

\begin{figure}[H]
    \centering
    \includegraphics[scale=0.35]{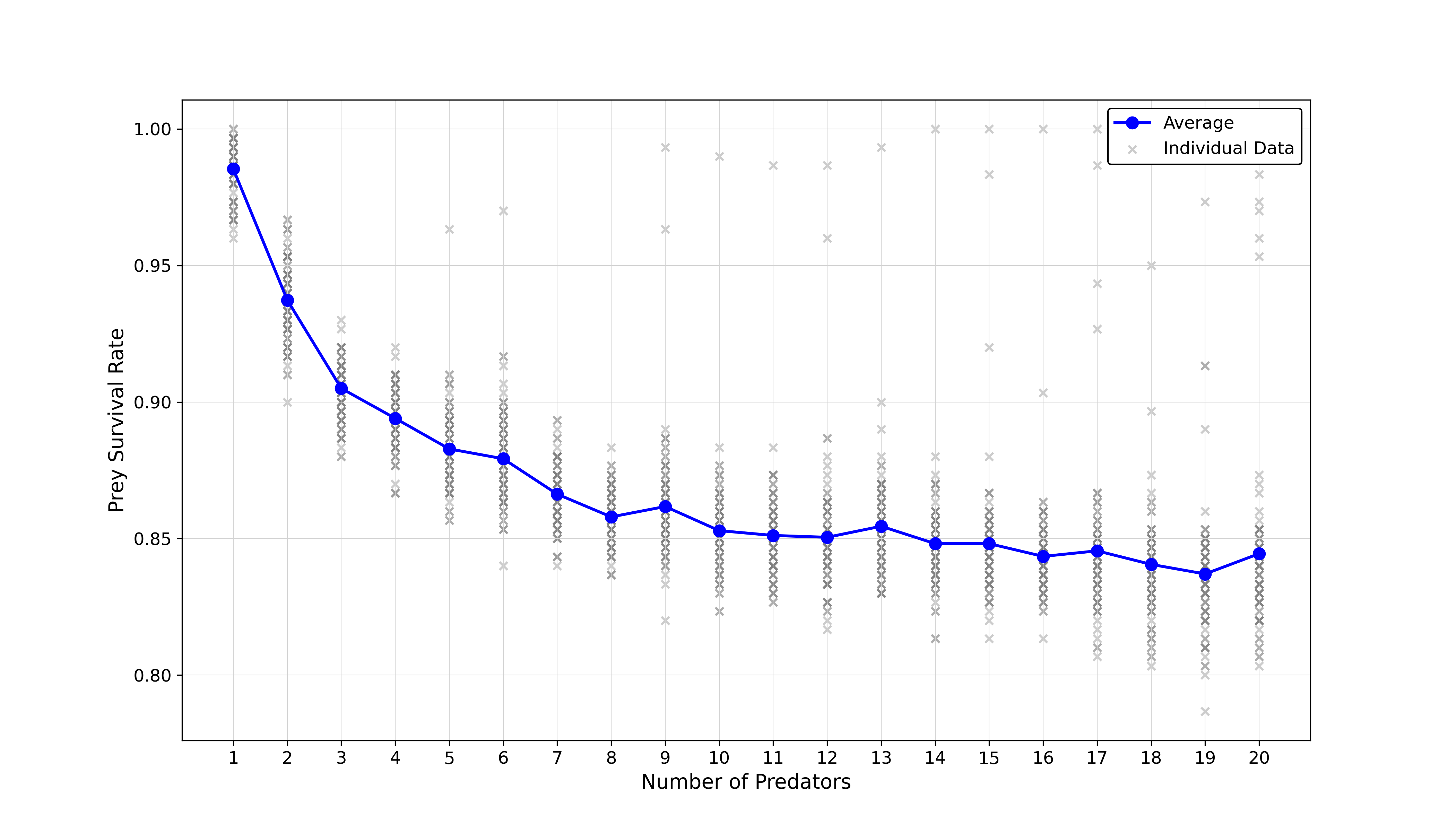}
    \caption{Prey Survival Rate Across Different Predator Numbers under Nearest Attack Strategy}
    \label{fig:nearest_rate}
\end{figure}

\subsection{Center Attack Strategy}

Figure~\ref{fig:center_avg} shows total prey captured under the center attack strategy. Captures increase at first ($1 \leq M \leq 3$), but growth slows ($4 \leq M \leq 12$) and then reverses beyond $M=12$. Concentration of predators at the prey center, combined with repulsive interactions, generates interference and lowers efficiency. In some cases, prey dispersal further reduces captures. 

\begin{figure}[H]
    \centering
    \includegraphics[scale=0.3]{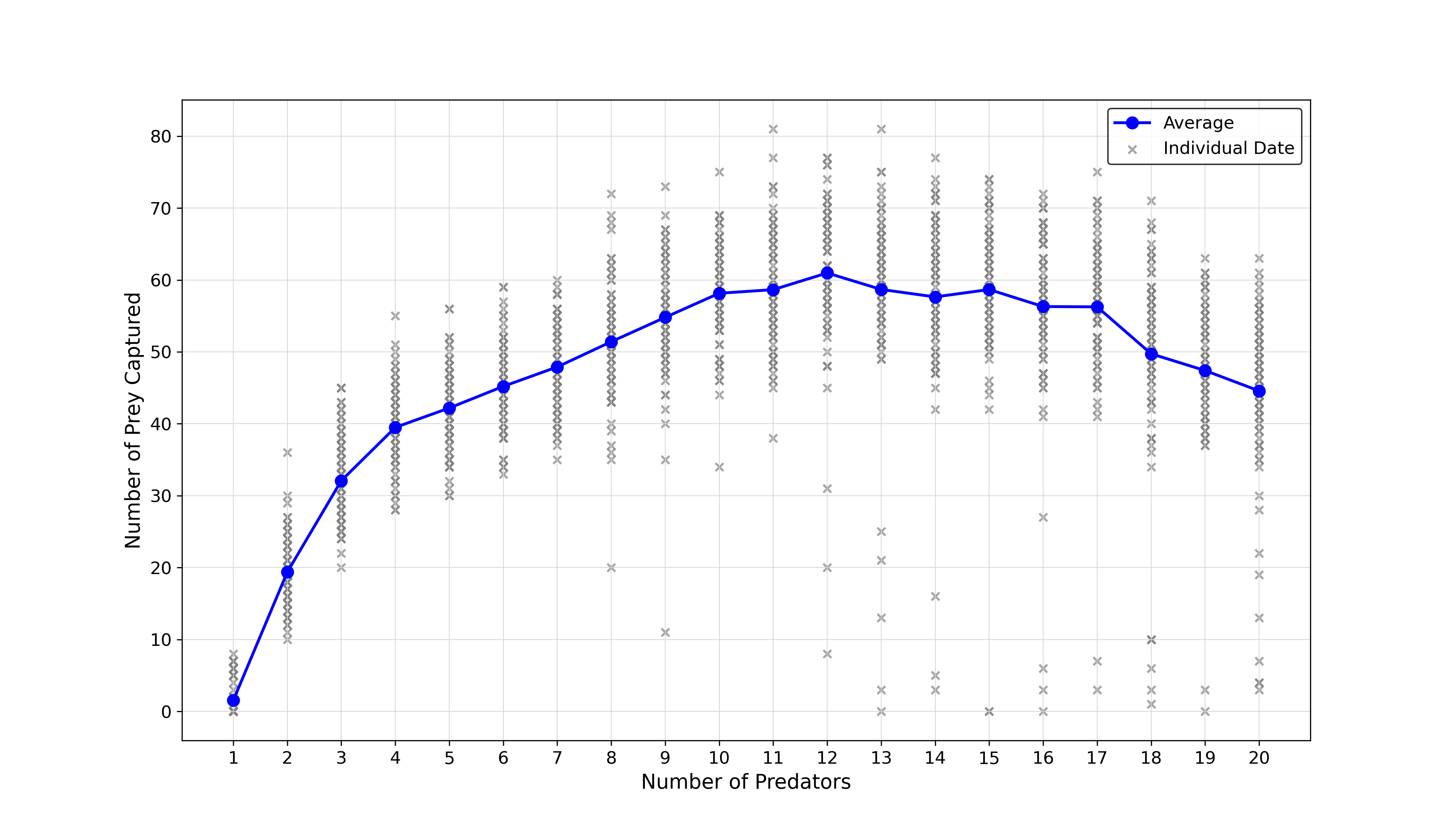}
    \caption{Expectation of Prey Captured by Predators under Center Attack Strategy}
    \label{fig:center_avg}
\end{figure}

Figures~\ref{fig:center_dm} and~\ref{fig:center_std} show that variation in captures (max–min difference and standard deviation) rises initially ($1 \leq M \leq 3$), stabilizes for moderate group sizes, and then decreases as strong competition reduces the advantage of more successful predators.

\begin{figure}[H]
    \centering
    \includegraphics[scale=0.3]{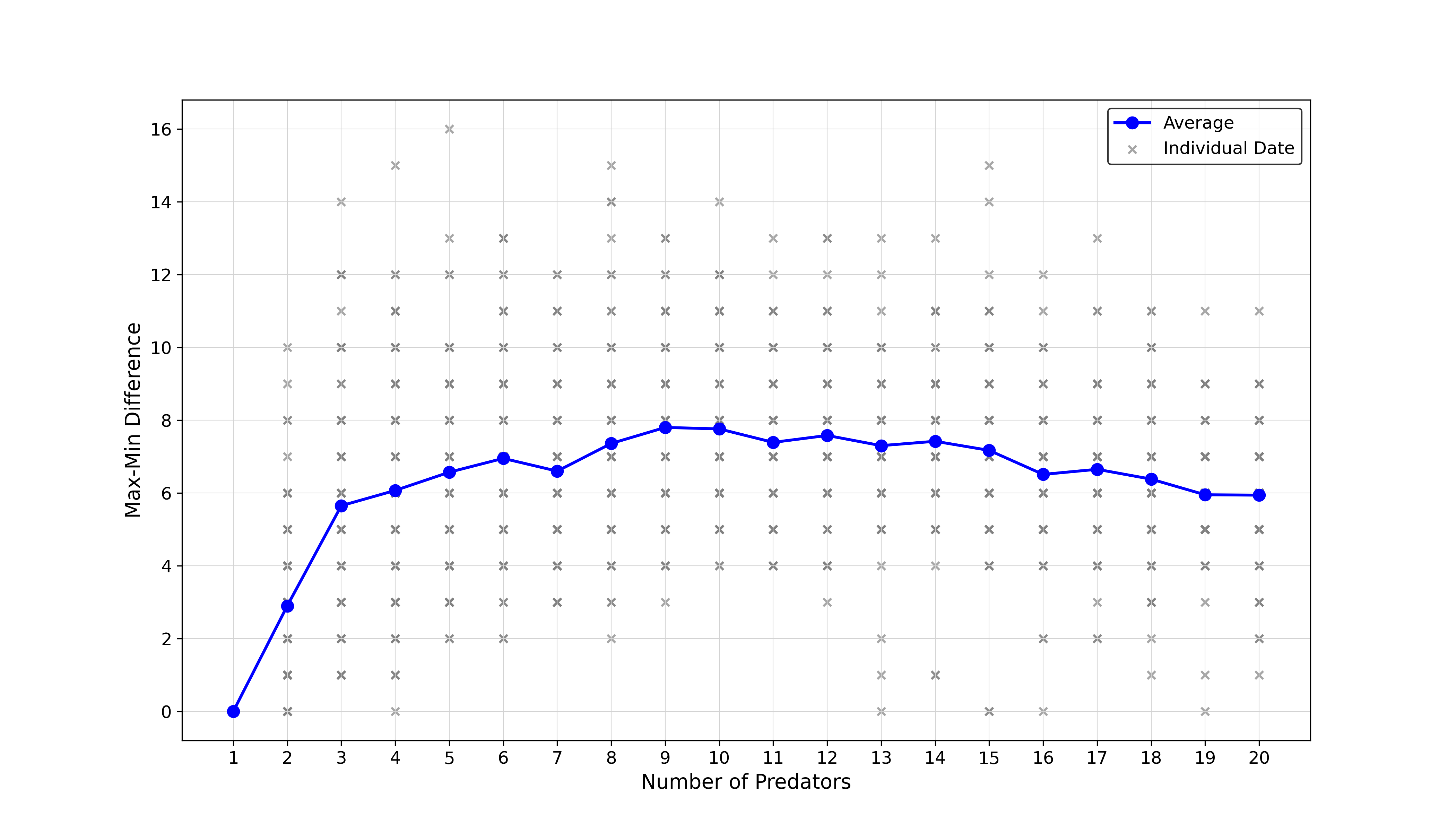}
    \caption{Max-Min Difference of Average Prey Captured by Predators under Center Attack Strategy}
    \label{fig:center_dm}
\end{figure}

\begin{figure}[H]
    \centering
    \includegraphics[scale=0.3]{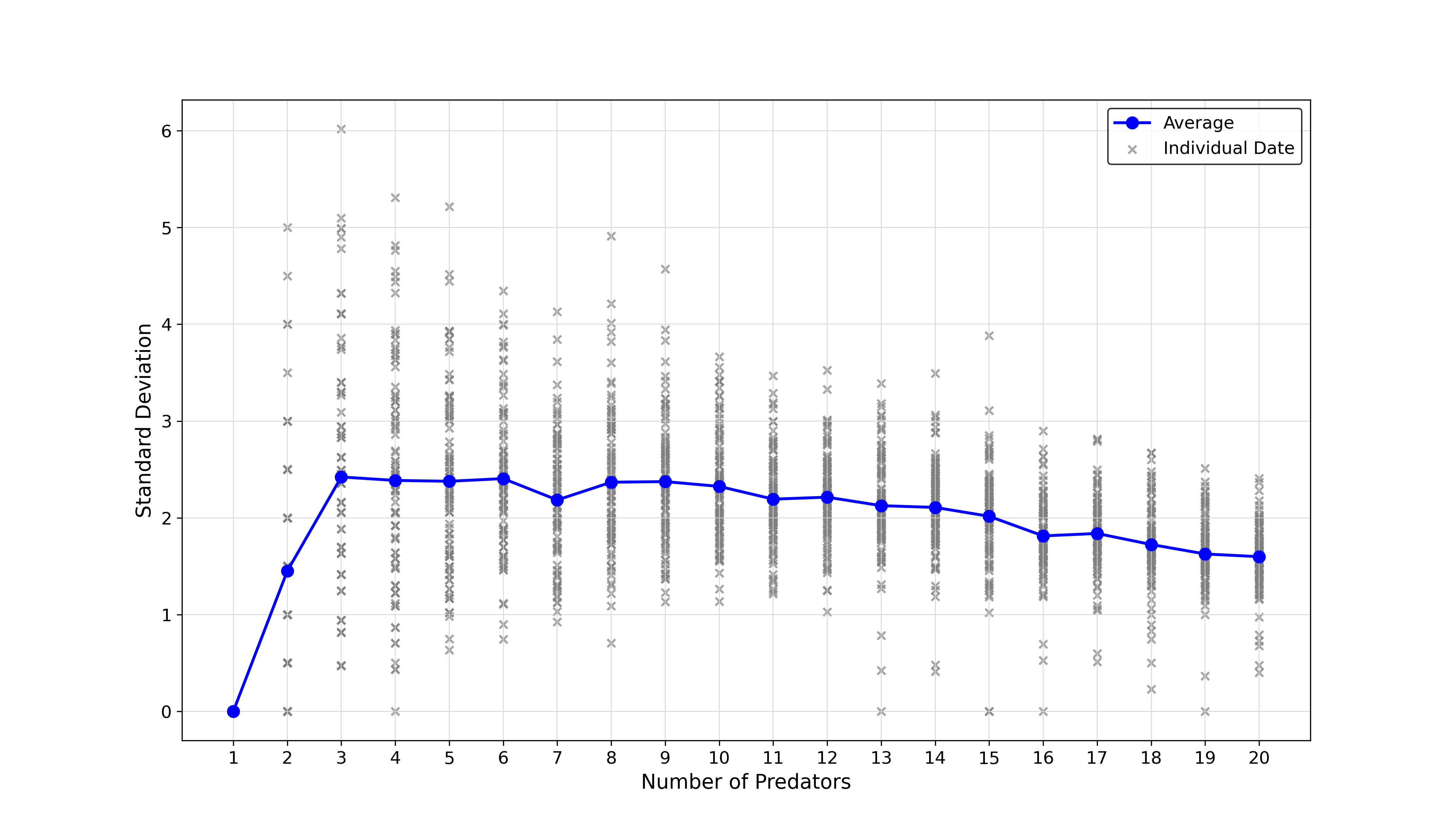}
    \caption{Standard Deviation of Average Prey Captured by Predators under Center Attack Strategy}
    \label{fig:center_std}
\end{figure}

Figure~\ref{fig:center_per} shows average prey captured per predator. Cooperative benefits are apparent for very small groups ($M \leq 3$), but efficiency declines monotonically thereafter. Extrapolating suggests that large predator schools would perform worse than solitary hunters.

\begin{figure}[H]
    \centering
    \includegraphics[scale=0.3]{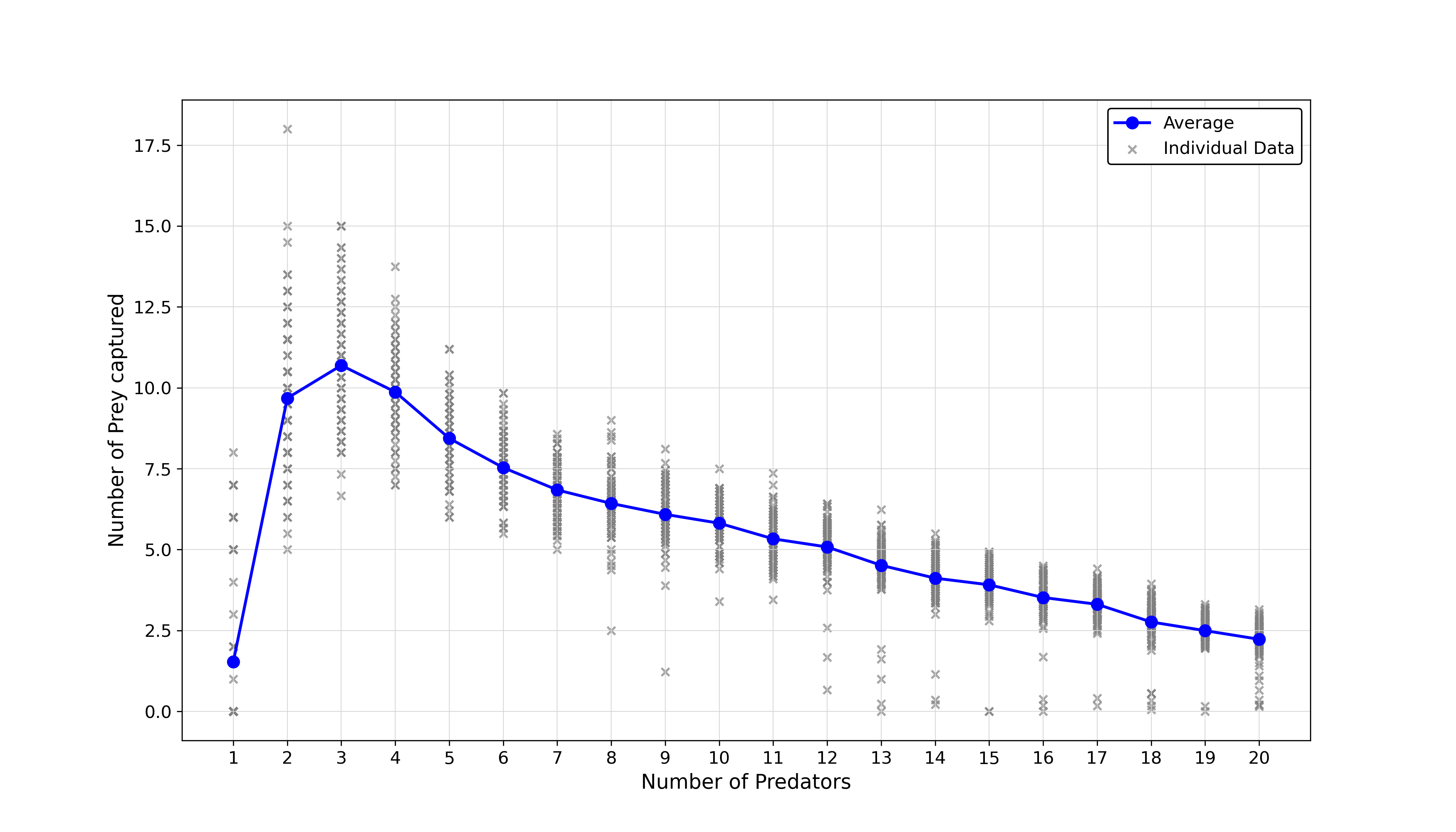}
    \caption{Average Number of Prey Captured per Predator under Center Attack Strategy}
    \label{fig:center_per}
\end{figure}

Prey survival time (Figure~\ref{fig:center_time}) decreases with predator numbers at first, but later stabilizes and even rebounds slightly around $M=17$. This rebound coincides with prey dispersal and increased predator interference. Survival rate (Figure~\ref{fig:center_rate}) shows a similar pattern: steep decline initially, then leveling off, and in some cases rising again when predator schools are very large.

\begin{figure}[H]
    \centering
    \includegraphics[scale=0.3]{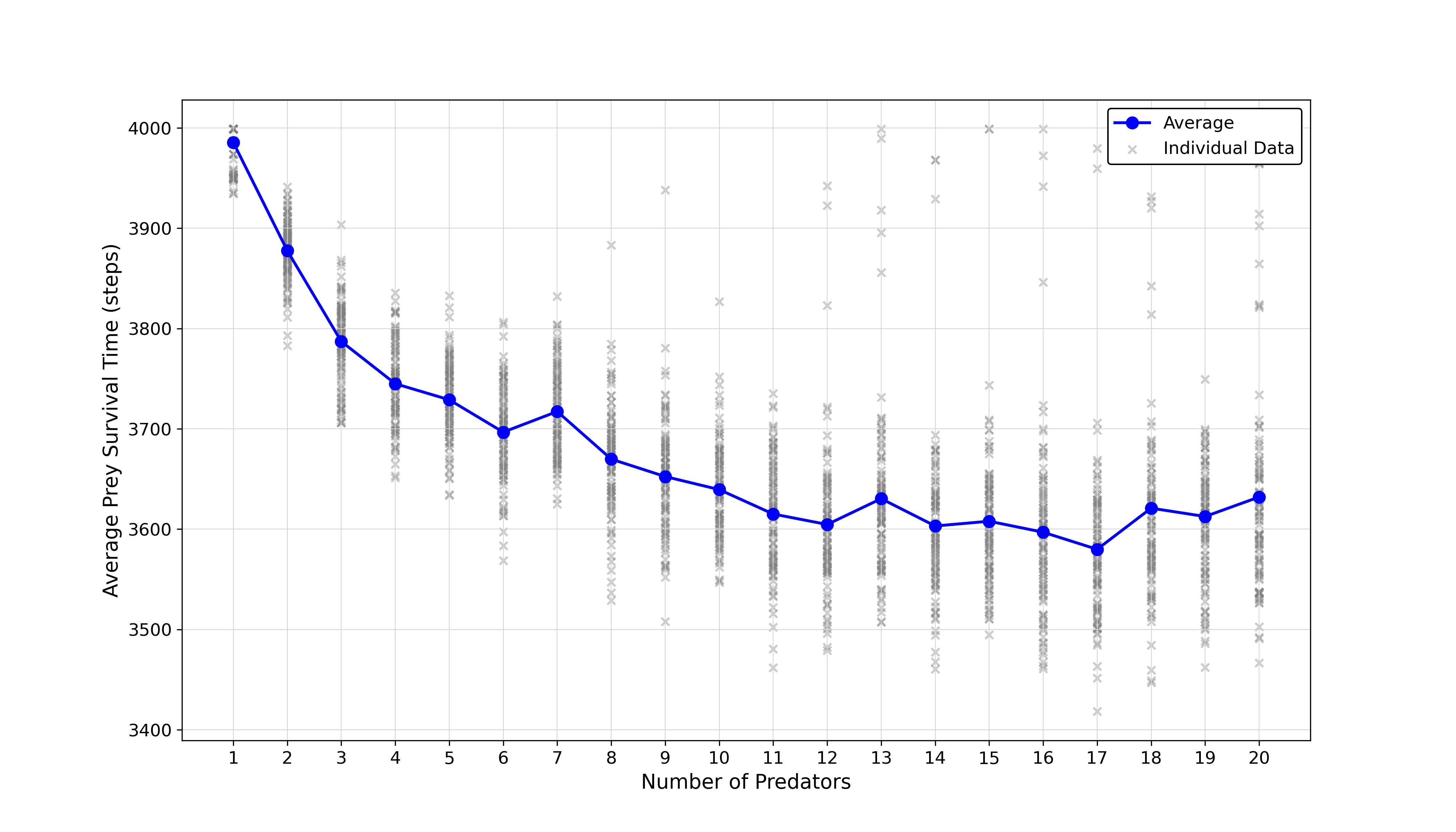}
    \caption{Prey Survival Time Across Different Predator Numbers under Center Attack Strategy}
    \label{fig:center_time}
\end{figure}

\begin{figure}[H]
    \centering
    \includegraphics[scale=0.3]{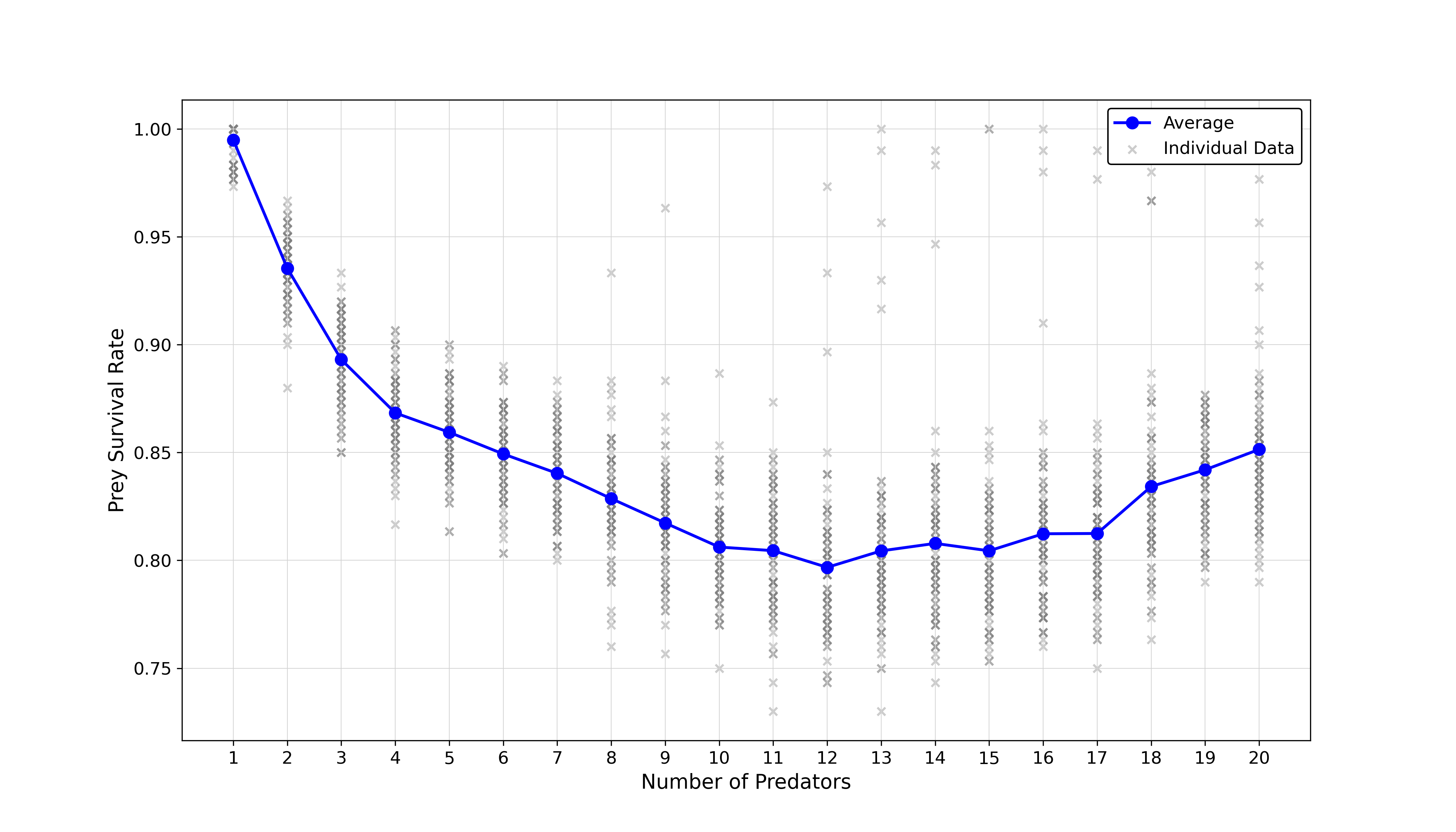}
    \caption{Prey Survival Rate Across Different Predator Numbers under Center Attack Strategy}
    \label{fig:center_rate}
\end{figure}

In summary, both strategies reveal strong nonlinear effects of predator school size. Small groups benefit from cooperation, but excessive numbers create competition and interference, limiting efficiency and in some cases improving prey survival.

\section{Conclusion}
\label{Conclusion}

This study develops a stochastic differential equation (SDE)-based model to explore predator–prey interactions in aquatic environments. By incorporating attraction, repulsion, alignment, and stochastic noise, the model reproduces realistic schooling behaviors and captures the complex dynamics between predator and prey groups. We examine how predator school size and hunting strategies influence prey survival, predation efficiency, and group-level behaviors.

Our results show that collective hunting increases predation efficiency compared to solitary attacks, but the benefit is nonlinear. Efficiency rises with predator number at first, then declines beyond a critical threshold due to competition and interference among predators. This highlights the dual role of cooperation and competition in shaping predator group success. Moreover, the model reproduces a variety of emergent dynamics—including prey dispersal and regrouping, cohesive defensive formations, and predator encirclement—that reveal the adaptive strategies both sides employ under predation pressure.

These findings advance understanding of how group size and hunting strategy shape predator–prey outcomes, with broader implications for ecological stability and the evolution of collective behaviors. In applied contexts, they indicate that managing predator–prey balances in aquatic ecosystems requires attention not only to species abundance but also to social hunting dynamics.

Future work could explore how these patterns shift under different model parameters and extend the framework to multiple predator species, higher trophic interactions, and heterogeneous environments, thereby providing deeper insights into the resilience and adaptability of natural ecosystems.

\section*{Acknowledgment}
This research was supported by the Tokuo Fujii Research Fund -- Support for Article Processing Charge of International Academic Papers.

\end{document}